\documentclass[conference]{IEEEtran}
%
\usepackage{bbding}
\usepackage[T1]{fontenc}
\usepackage{aurical}
\usepackage{huncial}
\usepackage{linearb}
\usepackage{textcomp}
\usepackage{amsmath}
\usepackage{amssymb}
\usepackage[amsmath,amsthm,thmmarks]{ntheorem}
\usepackage{amsfonts}
\usepackage{amssymb,epic,eepic}
\usepackage{amsmath}
\usepackage{stmaryrd}
\usepackage{capt-of}
\sloppy \frenchspacing
\usepackage{graphicx}
\usepackage[usenames,dvipsnames]{pstricks}
\usepackage{epsfig}
\usepackage{pst-grad}
 \usepackage{pst-plot}
\usepackage{mathtools}
\usepackage{textfit} 
\usepackage{subfigure}
\usepackage{subcaption}
\usepackage{float}

\newtheorem{theorem}{Theorem}[section]

\newtheorem{corollary}[theorem]{Corollary}
\newtheorem{definition}[theorem]{Definition}

\theoremstyle{definition}

\theoremstyle{remark}

\onecolumn

\ifCLASSINFOpdf
\else
\fi
\hyphenation{op-tical net-works semi-conduc-tor}

\IEEEoverridecommandlockouts
\begin{document}
%
\title{Quantum Byzantine Multiple Access Channels}

\author{\IEEEauthorblockN{Minglai Cai and
Christian Deppe 
}
\IEEEauthorblockA{\fontsize{10}{11}\selectfont \textit{
Institut f\"ur Nachrichtentechnik,
Technische Universit\"at Braunschweig
Germany,}\allowdisplaybreaks\\ 
minglai.cai@tu-braunschweig.de,  christian.deppe@tu-braunschweig.de }

}


%


\maketitle

\IEEEpeerreviewmaketitle

\begin{abstract}
In communication theory, attacks like eavesdropping or jamming are typically assumed to occur at the channel level, while communication parties are expected to follow established protocols. But what happens if one of the parties turns malicious? In this work, we investigate a compelling scenario: a multiple-access channel with two transmitters and one receiver, where one transmitter deviates from the protocol and acts dishonestly. To address this challenge, we introduce the Byzantine multiple-access classical-quantum channel and derive the communication region of rate for this adversarial setting.
\end{abstract}

\section{Introduction}
The rapid advancements in modern communication systems open up new possibilities while simultaneously exposing us to numerous cybersecurity threats. Many of these systems remain imperfect and vulnerable to hostile attacks, necessitating robust protection against malicious interference. 

A prominent example is the forthcoming 6G communication standard, which will integrate key technological infrastructure for the metaverse and other digital-twin applications. Given its profound impact on future applications, particularly in sensitive aspects of human well-being, ensuring a high degree of trustworthiness is imperative (\cite{Bo/Bo/Mo/fi,Fitzek2021,Fitzek2022,Fettweis2022}).

Communication models that account for a jammer attempting to disrupt proper communication between legitimate parties-often represented by the arbitrarily varying channel-have been extensively studied in both classical (\cite{Bl/Br/Th2}, \cite{Ahl0}) and quantum information theory (\cite{Ahl/Bli}, \cite{Bj/Bo/Ja/No}, \cite{Ahl/Bj/Bo/No}).

A multiple-access channel is a communication channel where two or more senders transmit information to a common receiver. The optimal coding strategies and capacity regions for classical multiple-access channels have been established in \cite{Ahl4} and \cite{Lia}, while those for multiple-access quantum channels have been derived in \cite{Win2} and \cite{Ya/Ha/De2}.

A multiple-access channel with a jammer of known identity is referred to as an arbitrarily varying multiple-access channel. The capacity regions of classical arbitrarily varying multiple-access channels were determined in \cite{Ja}. The symmetrizability conditions for these channels were studied in \cite{Gu} and \cite{Ahl/Ca}. In the classical-quantum setting, \cite{Da/Wa/Ha} established the capacity region by introducing a novel coding technique that does not rely on conditional typical subspace projectors.

However, modern digital communication systems are increasingly vulnerable to cyber threats, not only from external attackers such as hackers or government agencies but also from within. In some cases, protection against external threats alone is insufficient, as certain parties may act dishonestly or even attempt to disrupt the communication of honest participants.

Unlike arbitrarily varying channel models, the Byzantine model considers scenarios where one of the legitimate users may behave adversarially. Byzantine attacks on network coding have been extensively studied in \cite{Ko/To/Da}, \cite{Ja/La/Ka/Ho/Ka/Me}, and \cite{Wa/Si/Ks}.

\begin{center}\begin{figure}[H]
\begin{minipage}{0.32\linewidth}
        \centering
\includegraphics[width=0.95\linewidth]{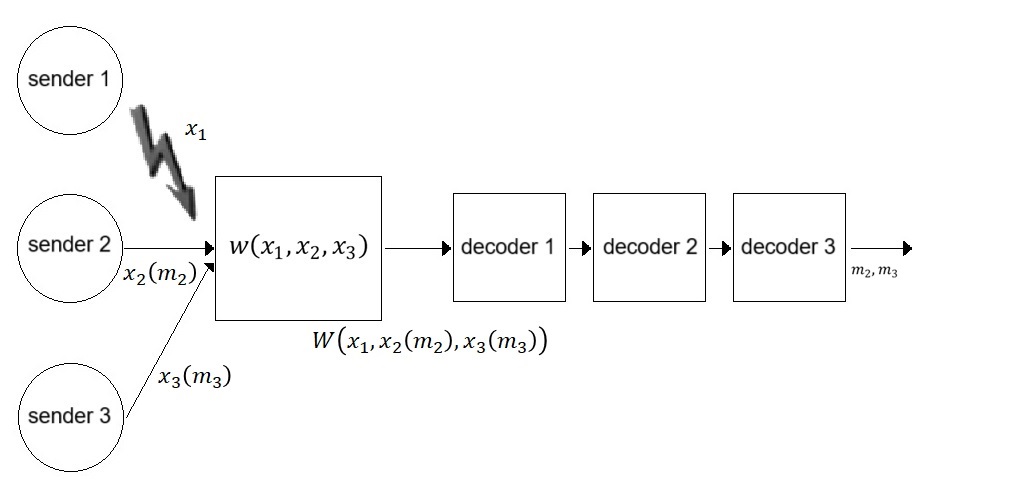}
 \end{minipage}
\begin{minipage}{0.32\linewidth}
        \centering
\includegraphics[width=0.95\linewidth]{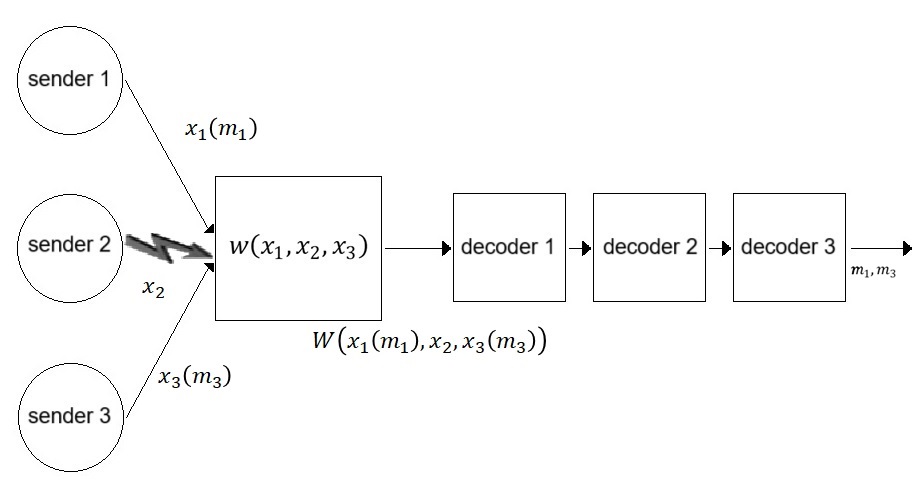}
 \end{minipage}
\begin{minipage}{0.32\linewidth}
        \centering
\includegraphics[width=0.95\linewidth]{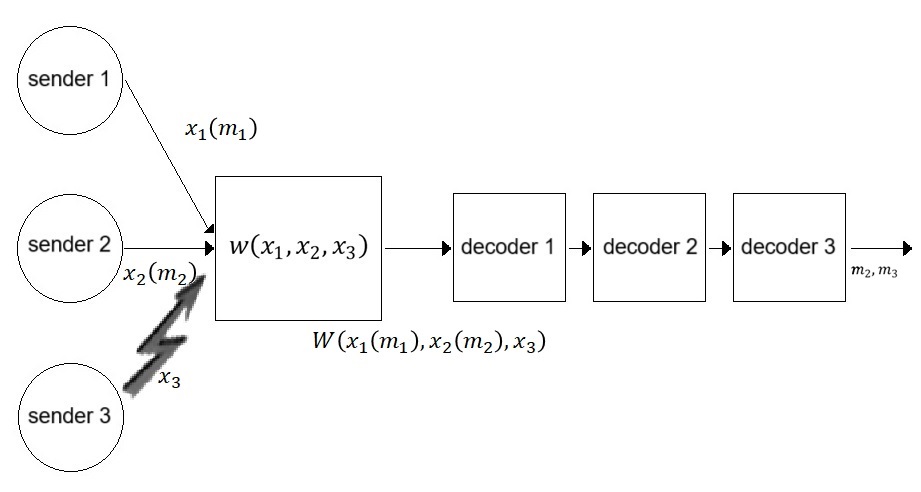}
 \end{minipage} 
 \caption{classical-quantum 3-user byzantine-multiple access channel, one user may be adversarial.}
\label{3userbymac}
\end{figure}\end{center}
In this work, we explore the quantum counterpart of a recently introduced communication model: the Byzantine multiple-access channel, first studied in \cite{Sa/Ba/De/Pr} and \cite{Sa/Ba/De/Pr2}. Specifically, we investigate a classical-quantum multiple-access channel where one of the transmitters may act adversarially. Instead of transmitting legitimate messages, the adversarial sender may inject malicious inputs as a jamming attack to disrupt communication.

A key challenge in this setting is that the receiver does not know which sender is adversarial. Despite this uncertainty, we require that the messages of honest senders be reliably decoded. The capacity region of the classical Byzantine multiple-access channel was established in \cite{Sa/Ba/De/Pr}, while \cite{Sa/Ba/De/Pr2} introduced a technique for identifying the adversarial sender.

When extending communication models to the quantum domain, several new challenges arise. In quantum systems, every measurement can alter the quantum state, meaning that disruptions may not only come from an adversarial sender's inputs but also from previously applied decoding operations (cf. Section \ref{NoO}).

Moreover, while the capacity of a classical arbitrarily varying channel can be achieved using a deterministic code-without relying on any randomness (cf. \cite{Cs/Na} and \cite{Ahl/Ca}, where a deterministic code for an arbitrarily varying multiple-access channel was constructed)-the same does not hold for quantum arbitrarily varying channels, where only random coding techniques have been developed so far (cf. \cite{Ahl/Bli} and \cite{Da/Wa/Ha}).

As a result, many of the sophisticated techniques from \cite{Sa/Ba/De/Pr} cannot be directly extended to quantum Byzantine multiple-access channels. Similarly, the identification technique proposed in \cite{Sa/Ba/De/Pr2} is not applicable: while a sequence of classical symbols can be easily verified against a set of expected legal channel outputs, a "false" quantum state may still have a nonzero probability of being accepted by the decoding operators for legitimate messages.
To overcome these challenges, we must develop a novel coding approach.

\section{Preliminaries}

Let \(\rho_1\) and \(\rho_2\) be Hermitian operators on a finite-dimensional complex Hilbert space \( A \). We write \(\rho_1 \geq \rho_2\) (or equivalently, \(\rho_2 \leq \rho_1\)) if \(\rho_1 - \rho_2\) is positive semidefinite.  
For a finite-dimensional complex Hilbert space \( A \), the set of density operators on \( A \) is denoted by  
\[
\mathcal{S}(A) := \left\{ \rho \in \mathcal{L}(A) \mid \rho \text{ is Hermitian, } \rho \geq 0, \text{ and } \mathrm{tr}(\rho) = 1 \right\},
\]  
where \(\mathcal{L}(A)\) is the set of linear operators on \( A \), and \( 0_A \) denotes the null matrix on \( A \). Note that every operator in \( \mathcal{S}(A) \) is bounded.

Let \( X \) be a finite alphabet, and \( A \) be a finite-dimensional complex Hilbert space. A classical-quantum channel is a map \( \mathtt{W}: X \rightarrow \mathcal{S}(A) \), where \( x \mapsto \mathtt{W}(x) \).

For a quantum state \( \rho \in \mathcal{S}(A) \), the von Neumann entropy of \( \rho \) is denoted by \( S(\rho) = - \mathrm{tr}(\rho \log \rho) \).  

Given a set \( X \), a Hilbert space \( A \), a classical-quantum channel \( \mathtt{V}: X \rightarrow \mathcal{S}(A) \), and a probability distribution \( P \) on \( X \), the conditional von Neumann entropy is defined as  
\[
S(\mathtt{V}|P) := \sum_{x \in X} P(x) S\left( \mathtt{V}(x) \right).
\]

The quantum relative entropy between \( \rho \) and \( \sigma \) is defined as  
\[
D(\rho \parallel \sigma) := \mathrm{tr}\left( \rho \left( \log \rho - \log \sigma \right) \right),
\]  
if \( \text{supp}(\rho) \subset \text{supp}(\sigma) \), and \( D(\rho \parallel \sigma) = \infty \) otherwise.

For a set \( X \), a Hilbert space \( A \), a classical-quantum channel \( \mathtt{V}: X \rightarrow \mathcal{S}(A) \), and a probability distribution \( P \) on \( X \), the quantum mutual information of the channel \( \mathtt{V} \) with input distribution \( P \) is defined as  
\[
I(P; \mathtt{V}) := \sum_{x \in X} P(x) D\left( \mathtt{V}(x) \parallel \sum_{x' \in X} P(x') \mathtt{V}(x') \right).
\]

Let \( \Phi := \{\rho_x : x \in X\} \) be a set of quantum states labeled by elements of \( X \). For a probability distribution \( P \) on \( X \), the Holevo quantity, or Holevo information, is defined as  
\[
\chi(P; \Phi) := S\left( \sum_{x \in X} P(x) \rho_x \right) - \sum_{x \in X} P(x) S(\rho_x).
\]

For a set \( X \) and a Hilbert space \( A \), let \( \mathtt{V}: X \rightarrow \mathcal{S}(A) \) be a classical-quantum channel. For a probability distribution \( P \) on \( X \), the Holevo quantity of the channel \( \mathtt{V} \) with input distribution \( P \) is defined as  
\[
\chi(P; \mathtt{V}) := S\left( \sum_{x \in X} P(x) \mathtt{V}(x) \right) - S(\mathtt{V} | P).
\]

Notice that when we choose an "optimal" set \( \Phi := \{\rho_x : x \in X\} \) (in the sense that it maximizes the quantum mutual information and the Holevo quantity), we have (cf. \cite{Hay2})  
\[
I(P; \mathtt{V}) = \chi(P; \mathtt{V}) \quad \text{.}
\]

\begin{definition}\label{symmet}
A set of classical-quantum channels \( \{\mathbf{W}(\cdot, t) : t \in \Theta \} \) is called an arbitrarily varying classical-quantum channel when the state \( t \) varies from symbol to symbol in an arbitrary manner.

We say that the arbitrarily varying classical-quantum channel \( \{\mathbf{W}(\cdot, t) : t \in \Theta \} \) is symmetrizable if there exists a parametric set of distributions \( \{\tau(\cdot \mid x) : x \in X \} \) on \( \Theta \) such that for all \( x, x' \in X \), the following condition holds:  
\[
\sum_{t \in \Theta} \tau(t \mid x) \mathbf{W}(x', t) = \sum_{t \in \Theta} \tau(t \mid x') \mathbf{W}(x, t).
\]
For a probability distribution $p$ on $\Theta$, we denote $\mathbf{W}(\cdot, p):=\sum_{t}\mathbf{W}(\cdot, t)$.
\end{definition}

\begin{definition}
Consider \( k \) independent senders, with sender \( i \) using an alphabet \( X_i \) and an a priori probability distribution \( P_i \). Let \( A \) be a finite-dimensional complex Hilbert space.  

A \( k \)-user Byzantine multiple-access classical-quantum channel \( \mathtt{W} \) is a map  
\[
\mathtt{W}: X_1 \times \cdots \times X_k \rightarrow \mathcal{S}(A),
\]  
where one of the senders may be adversarial, but the decoder does not know a priori which sender is adversarial.

\end{definition}

 \begin{definition}
A random code for a \( k \)-user Byzantine multiple-access classical-quantum channel \( \mathtt{W} \) consists of an encoder  
\[
\Bigl\{ \{ x_i^n(m_i)^{\gamma_i} : \gamma_i \in \Lambda_i \} : m_i \in M_i \Bigr\} \subset X_i^n
\]  
for every sender \( i \in \{1, \dots, k\} \), and a set of positive operator-valued measure (POVM) decoder operators  
\[
\Bigl\{ \left\{ D_{m_i}^{\gamma_i} : m_i \in M_i \right\} : \gamma_i \in \Lambda_i \Bigr\}
\]  
for each sender's message. These are parameterized by a random variable on a finite set \( \Lambda_i \). The interpretation of this definition is that both sender \( i \) and the receiver have access to the outcome of a random experiment uniformly distributed on \( \Lambda_i \).

The average probability of decoding error for the message of sender \( i \) is defined as  
\[
1 - \frac{1}{J_n} \frac{1}{|\Lambda_i|} \sum_{m_i \in M_i} \sum_{\gamma_i \in \Lambda_i} \mathrm{tr} \left( {\mathtt{W}}^{\otimes n} \left( x_i^n(m_i)^{\gamma_i} \right) D_{m_i}^{\gamma_i} \right).
\]
\label{randef}
\end{definition}

\begin{definition}
A non-negative number \( R_i \) is an achievable rate for the message of sender \( i \) in a Byzantine multiple-access classical-quantum channel \( \mathtt{W} \) under random coding if, for every \( \delta > 0 \), \( \epsilon > 0 \), and sufficiently large \( n \), there exists a random code such that  
\[
\frac{\log |M_i|}{n} > R_i - \delta,
\]  
and the average probability of decoding error for the message of sender \( i \) is less than \( \epsilon \) for any input sequence from the adversarial sender.

The closure of all achievable random rates for \( \mathtt{W} \) is called the random capacity region of \( \mathtt{W} \).
\end{definition}

\section{Message Transmission over
Quantum Byzantine Multiple Access Channel} 

\subsection{Impact of Decoding Order}\label{NoO}
In the classical multiple-access channel scenario, it is evident that the order in which the messages of the senders are decoded does not matter. In fact, most classical multiple-access codes decode all messages simultaneously (cf. \cite{Ahl4} and \cite{Lia}). However, in the quantum multiple-access channel scenario, the previously decoded operator can affect the subsequent decoded messages, as every measurement may alter the channel outputs.

The standard solution to this issue in the quantum multiple-access channel scenario is based on the following observation. Suppose the first decoding operator can decode the first message with a probability sufficiently close to 1, and assume this decoding operator is constructed using subspace projections. In this case, we can apply the Tender Operator Lemma (cf. \cite{Win} and \cite{Og/Na}) to show that this decoding operator does not significantly alter the channel outputs.

However, in the Byzantine multiple-access classical-quantum channel scenario, we cannot apply the Tender Operator trick, since we cannot guarantee that the first decoding operator will decode the first message with a sufficiently high probability, as the first sender may indeed be the adversarial one. Therefore, in this scenario, it is possible that if the first sender is adversarial, the first decoding operator may alter the channel outputs in such a way that the second message becomes undecodable. The following simple one-shot
example illustrates this point:

We define the input alphabets as \( X_1 := \{0,1\} \) and \( X_2 := \{0,1,2\} \). Let \( A \otimes B \) be a six-dimensional Hilbert space spanned by the orthonormal basis:

\[
\{|i\rangle^A\otimes|j\rangle^B : i \in \{0,1\}, j \in \{0,1,2\} \}.
\]

Consider a two-user Byzantine multiple access classical-quantum channel \( \mathtt{W}: X_1 \times X_2 \to \mathcal{S}(A \otimes B) \), defined by

\[
\mathtt{W} ((i,j)) = (|i\rangle^A\otimes |j \rangle^B )
(\langle j|^B \otimes \langle i |^A), \quad \forall i \in \{0,1\}, j \in \{0,1,2\}.
\]

Suppose sender 1 transmits messages from the set \( \{m_0^1, m_1^1\} \), and sender 2 transmits messages from the set \( \{m_0^2, m_1^2\} \). That is, each sender's message set consists of two possible messages.

\subsection*{Encoding Strategy}

Each sender encodes their message as follows:  
For \( k \in \{1,2\} \), sender \( k \) simply transmits \( i \in \{0,1\} \) as their channel input to represent message \( m_i^k \).

\subsection*{Decoding Strategy}

The receiver employs the following POVM to decode sender 1's message:

\[
D_0^{(1)} := \sum_{j=0}^{2} (|0\rangle^A \otimes |j\rangle^B)(\langle j|^B \otimes \langle 0|^A),
\]

\[
D_1^{(1)} := \sum_{j=0}^{2} (|1\rangle^A \otimes |j\rangle^B)(\langle j|^B \otimes \langle 1|^A).
\]

In other words, the receiver measures only the \( A \)-part of the channel output, leaving the \( B \)-part unchanged.

The receiver employs the following POVM to decode the message of sender 2:

\begin{align*}
D_0^{(2)} &= (|0\rangle^A\otimes |0 \rangle ^B)(\langle 0|^B \otimes \langle 0 |^A)
+ (|1\rangle^A\otimes |0 \rangle^B)(\langle 0| ^B\otimes \langle 1 |^A) \\
&\quad + \frac{1}{2}(|0\rangle^A\otimes |2\rangle^B)(\langle 2|^B \otimes \langle 0 |^A)
+ \frac{1}{2}(|1\rangle^A\otimes |2 \rangle^B)(\langle 2|^B \otimes \langle 1 |^A) \\
&\quad + \frac{1}{2}(|0\rangle^A\otimes |2 \rangle^B)(\langle 2|^B \otimes \langle 1 |^A)
+ \frac{1}{2}(|1\rangle^A\otimes |2\rangle^B)(\langle 2|^B \otimes \langle 0 |^A), \\[0.3cm]
D_1^{(2)} &= (|0\rangle^A\otimes |1 \rangle ^B)(\langle 1|^B \otimes \langle 0 |^A)
+ (|1\rangle^A\otimes |1 \rangle^B)(\langle 1| ^B\otimes \langle 1 |^A) \\
&\quad + \frac{1}{2}(|0\rangle^A\otimes |2 \rangle^B)(\langle 2|^B \otimes \langle 0 |^A)
+ \frac{1}{2}(|1\rangle^A\otimes |2\rangle^B)(\langle 2|^B \otimes \langle 1 |^A) \\
&\quad - \frac{1}{2}(|0\rangle^A\otimes |2 \rangle^B)(\langle 2|^B \otimes \langle 1 |^A)
- \frac{1}{2}(|1\rangle^A\otimes |2\rangle^B)(\langle 2|^B \otimes \langle 0 |^A).
\end{align*}

The receiver applies these operators to decode the message of sender 2.

\begin{center}\begin{figure}[H]
\begin{minipage}{0.45\linewidth}
        \centering
\includegraphics[width=0.95\linewidth]{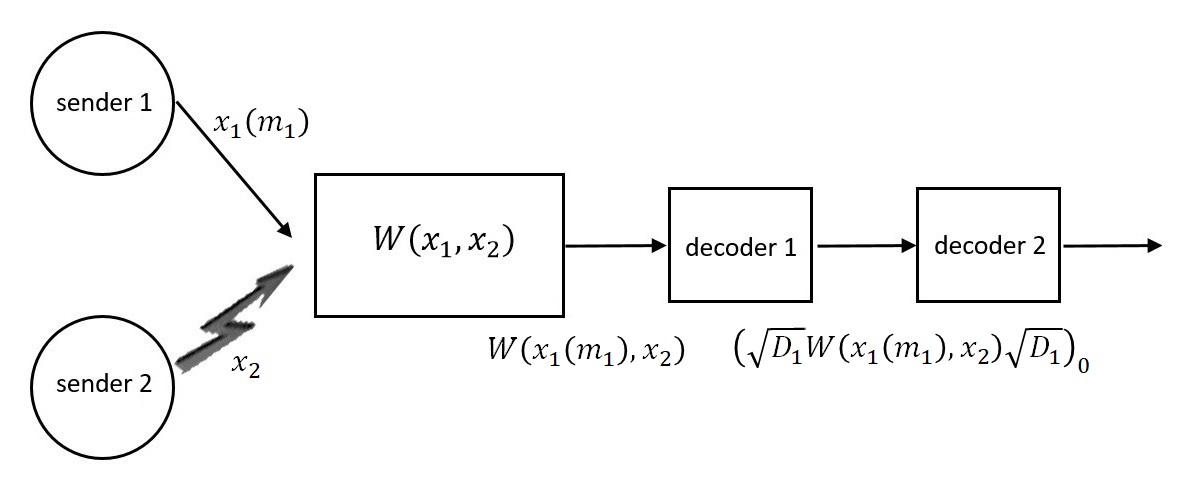}
\caption{Sender \(1\) is trustworthy, while sender \(2\) is adversarial. The receiver first decodes the message from sender \(1\), followed by the message from sender \(2\).}
 \end{minipage}
  \hspace{.05\linewidth}
\begin{minipage}{0.45\linewidth}
        \centering
\includegraphics[width=0.95\linewidth]{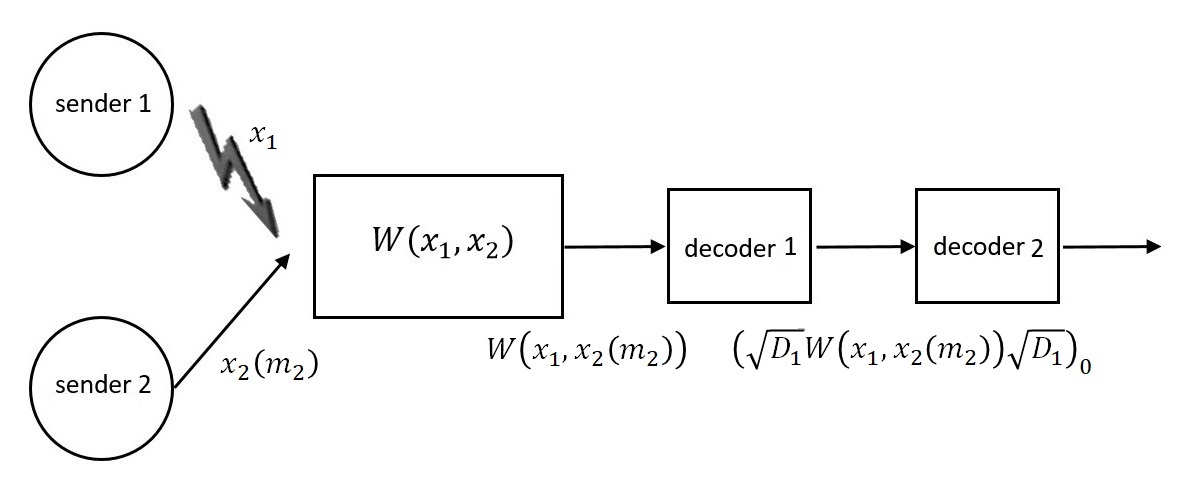}
\caption{Sender \(2\) is trustworthy, while sender \(1\) is adversarial. The receiver first decodes the message from sender \(1\), followed by the message from sender \(2\).
}
 \end{minipage}
\label{2userbymac}
\end{figure}\end{center}

First, assume that the receiver decides to decode the message of sender \(1\) first.
We distinguish three different cases:

\textbf{Case 1a:} Sender \(1\) is trustworthy, while sender \(2\) is adversarial.  
In this case, the receiver can decode sender \(1\)'s message with probability \(1\), regardless of the input from sender \(2\).

\textbf{Case 1b:} Both senders are trustworthy.  
The receiver first decodes sender \(1\)'s message with probability \(1\). Since this decoding does not alter part \(B\) of the channel outputs, the receiver can then decode sender \(2\)'s message with probability \(1\) as well.

\textbf{Case 1c:} Sender \(1\) is adversarial, while sender \(2\) is trustworthy.  
Since the measurement operators \(D_0^{(1)}\) and \(D_1^{(1)}\) do not affect part \(B\) of the channel outputs, the receiver can still decode sender \(2\)'s message with probability \(1\), regardless of the input from sender \(1\).  

Thus, in this decoding order, we obtain a reliable code for the Byzantine multiple-access channel.

\begin{center}\begin{figure}[H]
\begin{minipage}{0.45\linewidth}
        \centering
\includegraphics[width=0.95\linewidth]{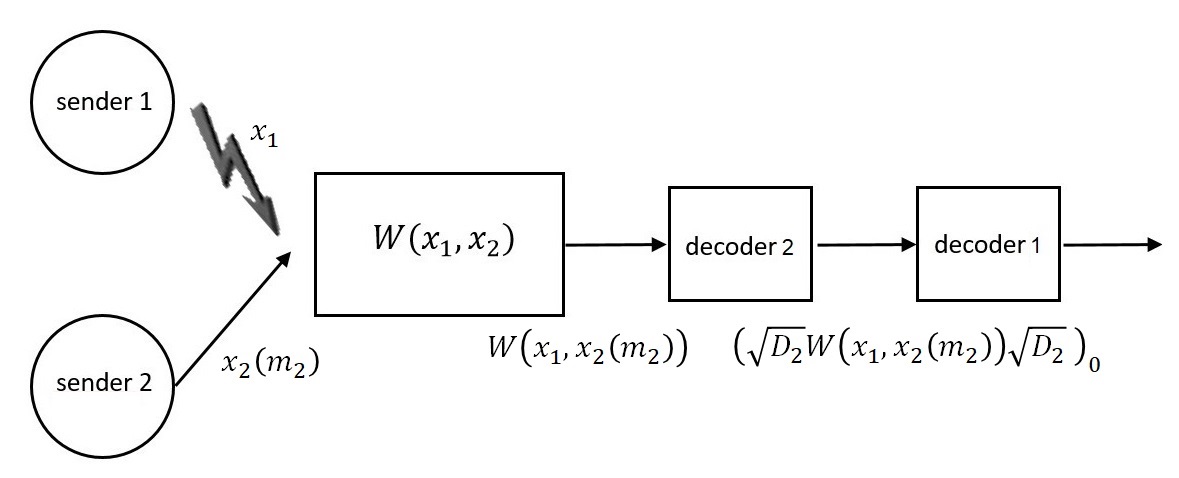}
\caption{\textit{Sender \(2\) is trustworthy, while sender \(1\) is adversarial. The receiver first decodes sender \(2\)'s message, followed by sender \(1\)'s message.}
}
 \end{minipage}
  \hspace{.05\linewidth}
\begin{minipage}{0.45\linewidth}
        \centering
\includegraphics[width=0.95\linewidth]{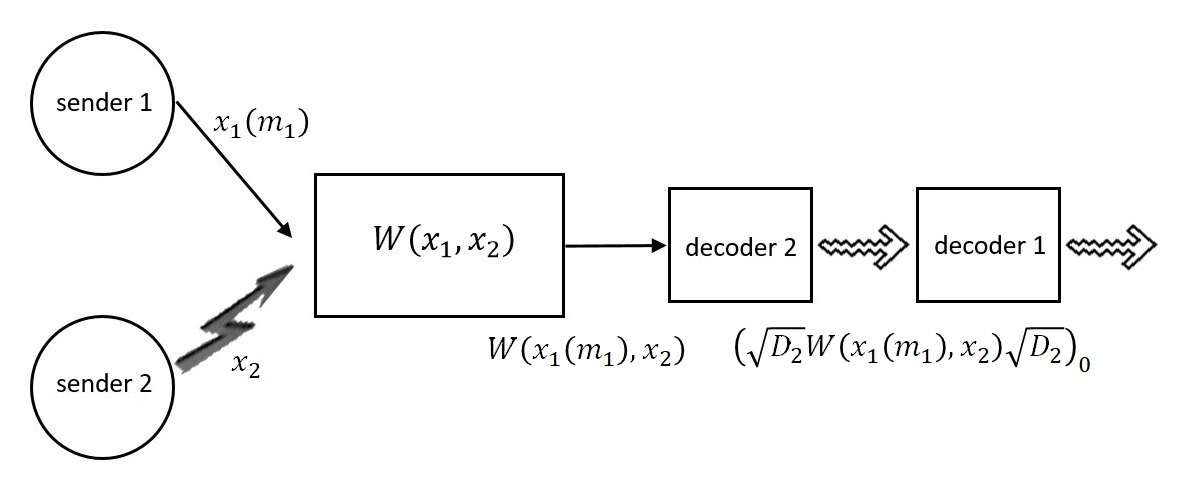}
\caption{\textit{Sender \(1\) is trustworthy, while sender \(2\) is adversarial. The receiver first decodes sender \(2\)'s message, followed by sender \(1\)'s message. In this case, the decoding operator for sender \(2\)'s message interferes with the decoding of sender \(1\)'s message.}
}
 \end{minipage}
\label{2userbymac2}
\end{figure}\end{center}

Now, let us assume that the receiver decides to decode the message of sender $2$ first. Once again, we distinguish three different cases:

\textbf{Case 2a}: Sender $2$ is trustworthy, but sender $1$ is adversarial.  
In this case, the receiver can decode sender $2$'s message with probability 1, regardless of what sender $1$ inputs into the channel.

\textbf{Case 2b}: Both senders are trustworthy.  
Initially, the receiver can decode sender $2$'s message with probability 1. Since $D_0^{(2)}$ and $D_1^{(2)}$ do not affect part $A$ of the channel outputs when sender $2$ is trustworthy, the receiver can then decode sender $1$'s message with probability 1.

\textbf{Case 2c}: Sender $2$ is adversarial, while sender $1$ is trustworthy.  
Consider the situation where sender $1$ wants to send $m_{0}^1$ and transmits $0$ through the channel, while sender $2$, as part of a jamming attack, transmits $2$. The channel input is now $(0,2)$, and the channel output is:

\[
(|0\rangle^A \otimes |2 \rangle^B)(\langle 2|^B \otimes \langle 0 |^A)
\]

When the receiver, unaware that sender $2$ is adversarial, uses $D_0^{(2)}$ and $D_1^{(2)}$ to measure sender $2$'s message, there is a $50\%$ chance that the receiver obtains
 either $m_{0}^2$ or $m_{1}^2$. When he obtains
 $m_{0}^2$, then the resulting channel output after the measurement is given by:

\[
\frac{\sqrt{D_0^{(2)}}(|0\rangle^A \otimes |2 \rangle^B)(\langle 2|^B \otimes \langle 0 |^A) \sqrt{D_0^{(2)}}}
{\mathsf{Tr}(D_0^{(2)} (|0\rangle^A \otimes |2 \rangle^B))}
= \frac{1}{2}(|0\rangle^A \otimes |2\rangle^B)(\langle 2|^B \otimes \langle 0 |^A)
+ \frac{1}{2}(|1\rangle^A \otimes |2 \rangle^B)(\langle 2|^B \otimes \langle 1 |^A)
\]

If instead the receiver obtains $m_{1}^2$, the channel output after the measurement is:

\[
\frac{\sqrt{D_1^{(2)}}(|0\rangle^A \otimes |2 \rangle^B)(\langle 2|^B \otimes \langle 0 |^A) \sqrt{D_1^{(2)}}}
{\mathsf{Tr}(D_1^{(2)} (|0\rangle^A \otimes |2 \rangle^B))}
= \frac{1}{2}(|0\rangle^A \otimes |2\rangle^B)(\langle 2|^B \otimes \langle 0 |^A)
- \frac{1}{2}(|0\rangle^A \otimes |2 \rangle^B)(\langle 2|^B \otimes \langle 1 |^A)
\]

When the receiver subsequently uses $D_0^{(1)}$ and $D_1^{(1)}$ to measure sender $1$'s message, there is a $50\%$ chance that the receiver obtains either $m_{0}^1$ or $m_{1}^1$.

The same scenario occurs when sender $1$ transmits $m_{1}^1$ (i.e., when $1$ is sent through the channel) and sender $2$ jams the channel by sending $2$. In this case, the "worse order", where the adversarial message is decoded first, results in the second code becoming ineffective due to the interference of the first POVM decoding operator. Therefore, in case $2c$, decoding the message of the adversarial sender first renders the code unreliable.

\subsection{Main Result}

For our channel model, we should employ a separable decoding operator for each sender's message, rather than a simultaneous decoding approach, since not all senders can be trusted. In fact, the standard methods of quantum simultaneous decoding also utilize the same technique as the Tender Operator Lemma (cf. \cite{Fa/Ha/Sa/Se/Wi}, \cite{Se}). This implies that these ``simultaneous decoding operators'' essentially decode one message at a time, with a few exceptions (cf. \cite{Ch/Ne/Se}).

The standard coding strategy for non-symmetrizable classical-quantum arbitrarily varying channels involves constructing a random codeword (cf. \cite{Bl/Br/Th2}, see also \cite{Mo}, \cite{Hay}). This concept was further utilized in \cite{Da/Wa/Ha} to develop a code for classical-quantum arbitrarily varying multiple-access channels.

Now, assume the receiver decodes the messages in the following order: the message from sender 1 is decoded first, followed by the message from sender 2. As shown in \cite{Ahl/Bli}, by setting \( |M_1| < \max_{x^n_2} \chi(p_1, B_{x^n_2}) - \delta \), a random code \( E_1 = \{ x^n_1(m_1)^{\gamma_1}, D_{m_1}^{\gamma_1} : m_1 \in M_1 \} \) exists such that

\[
\max_{x^n_2} \frac{1}{|M_1|} \sum_{m_1} \mathtt{W}((x^n_1(m_1), x^n_2)) D_{m_1}^{\gamma_1} > 1 - \epsilon,
\]

where, for each \( m_1 \), the elements \( \{ x^n_1(m_1)^{\gamma_1} : \gamma_1 \} \) are permutations of a specific codeword \( x^n_1(m_1) \), and \( \{ D_{m_1}^{\gamma_1} : \gamma_1 \} \) are products of permutation operators and a single decoding operator \( D_{m_1} \). 

Thus, when the receiver receives the output corresponding to the common randomness shared with sender 1, he first applies a permutation, performs the initial measurement, and then reverses the permutation before making the second measurement. Consequently, the common randomness shared with sender 1 does not affect sender 2's coding strategy. Sender 2 only needs to consider the code \( \{ x^n_1(m_1), D_{m_1} : m_1 \in M_1 \} \), rather than every permutation of the codeword \( \{ x^{\gamma_1}_1(m_1), D_{m_1}^{\gamma_1} : m_1 \in M_1 \} \).

The POVM \( \{ D_{m_1} : m_1 \in M_1 \} \) defines a channel \( C_{E_1} \), where

\[
\rho \rightarrow \sum_{m_1} \sqrt{D_{m_1}} \rho \sqrt{D_{m_1}}.
\]

\begin{theorem}
Let \( \mathtt{W} \) denote a classical-quantum 2-user Byzantine multiple-access channel. Assume that the message of sender 1 is decoded first, followed by the message of sender 2. The random capacity region of \( \mathtt{W} \) is given by

\[
R_1 \leq \max_{p_1} \min_{p_2} I(p_1, B),
\]

\[
R_2 \leq \max_{p_2} \min_{p_1} I(p_2, E_1 B),
\]

where 
\(B\)  represents the outcome of \(  \mathtt{W}((p_1, p_2) \). and 
\( E_1 B \) represents the outcome of \( C_{E_1} \circ \mathtt{W}(p_1,p_2) \).

\label{lwbactheo} \end{theorem}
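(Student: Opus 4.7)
The plan is to treat the two-stage decoding as two sequential applications of the random-coding capacity theorem for classical-quantum arbitrarily varying channels from \cite{Ahl/Bli}, with careful bookkeeping of how the common randomness $\gamma_1$ shared with sender~1 interacts with sender~2's strategy. In the first stage, sender~2's block input $x_2^n$ plays the role of the jammer against sender~1; in the second stage, the effective channel seen by sender~2 becomes $C_{E_1}\circ\mathtt{W}$, and sender~1's block input plays the role of the jammer against sender~2.

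First I would fix an input distribution $p_1$ on $X_1$ and invoke the Ahlswede--Blinovsky random coding theorem to obtain, for every $\delta>0$ and all sufficiently large $n$, a random code $E_1=\{x_1^n(m_1)^{\gamma_1},D_{m_1}^{\gamma_1}\}$ of size $|M_1|$ with $n^{-1}\log|M_1|>\min_{p_2}I(p_1,B)-\delta$ whose average decoding error is small uniformly over every block input $x_2^n$ of the potentially adversarial second sender. As recalled just before the theorem, the $\gamma_1$-dependent codewords and POVM elements are obtained by applying a uniformly random permutation $\pi_{\gamma_1}$ to nominal objects $x_1^n(m_1)$ and $D_{m_1}$; taking the closure over $p_1$ yields the bound $R_1 \le \max_{p_1}\min_{p_2}I(p_1,B)$.

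For sender~2 the key observation is that the permutation structure lets the receiver, given $\gamma_1$, apply $\pi_{\gamma_1}^{-1}$ to the post-measurement state before running sender~2's decoder; equivalently, sender~2 can design its code against the fixed nominal instrument
\[
C_{E_1}(\rho)\;=\;\sum_{m_1}\sqrt{D_{m_1}}\,\rho\,\sqrt{D_{m_1}}
\]
composed with $\mathtt{W}$. The map $C_{E_1}\circ\mathtt{W}$ is itself a classical-quantum channel on the product input alphabet, so Ahlswede--Blinovsky applies once more: fixing $p_2$, sender~1's block input plays the role of the arbitrarily varying state, and one obtains a random code of size $|M_2|$ with $n^{-1}\log|M_2|>\min_{p_1}I(p_2,E_1B)-\delta$. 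Maximising over $p_2$ produces the second rate. The converse in both coordinates follows from the standard single-user converse for classical-quantum arbitrarily varying channels applied separately to each stage: any reliable code for sender~$i$ must in particular be reliable when the other sender selects the i.i.d.\ distribution that minimises the Holevo information, which upper bounds the rate by the indicated min--max expression.

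The step I expect to be the main obstacle is justifying that the two-stage achievability remains legitimate in the Byzantine setting. Concretely, one must show that the permutation trick genuinely decouples sender~2's codebook from the common randomness $\gamma_1$ shared between sender~1 and the receiver, even when sender~1 is adversarial and may try to exploit knowledge of the nominal codebook, and that the Holevo quantity of the composed channel $C_{E_1}\circ\mathtt{W}$ is well approximated by $I(p_2,E_1B)$ with high probability over the first-stage code construction. Since $C_{E_1}$ is a random quantum instrument induced by the first-stage POVM rather than a clean unitary post-processing, controlling how it disturbs the conditional states $\mathtt{W}(x_1,x_2)$ on a typical subset of $\gamma_1$---precisely the difficulty emphasised in Section~\ref{NoO} and in the authors' remark that the Tender Operator Lemma is not available here---is the technical heart of the argument.
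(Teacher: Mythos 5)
Your proposal follows essentially the same route as the paper: achievability by two sequential invocations of the Ahlswede--Blinovsky random-coding theorem for classical-quantum arbitrarily varying channels (with sender 2's input as the jammer in stage one, and the composed instrument channel \( C_{E_1}\circ\mathtt{W} \) with sender 1 as the jammer in stage two), using the permutation structure of the common randomness \( \gamma_1 \) to decouple it from sender 2's codebook. Your converse is stated more briefly than the paper's, which explicitly reduces an arbitrary (possibly simultaneous) decoder to marginal POVMs \( D_{m_1}^{\gamma_1,\gamma_2}:=\sum_{m_2}D_{m_1,m_2}^{\gamma_1,\gamma_2} \) and averages the post-measurement Holevo bound over the outcomes \( m_1 \) to arrive at \( \min_{p_1} I(p_2,E_1B) \), but the underlying argument is the same.
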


\begin{proof}
The direct part of the proof follows directly from the arbitrarily varying capacity of \( \mathtt{W}((x^n_1(i), \cdot)) \) after the measurement of sender 1's message (cf. \cite{Ahl/Bli} and \cite{Da/Wa/Ha}). It is also evident that for our coding scheme, the rate is upper-bounded by (\ref{lwbactheo}). 

Now, let us consider the converse in the general case, i.e., to show that regardless of the coding strategy we choose, the order of decoding always affects the capacity. Specifically, for the message decoded second, we must always consider \( C_{E_1} \circ \mathtt{W} \) for some \( C_{E_1} \).

If the receiver uses a simultaneous decoding operator \( \{D_{m_1, m_2}^{\gamma_1, \gamma_2} : m_1, m_2\} \), we can always construct separable decoding operators \( \{D_{m_1}^{\gamma_1, \gamma_2} : m_1\} \) and \( \{D_{m_2}^{\gamma_1, \gamma_2} : m_2\} \) by defining
\[
D_{m_1}^{\gamma_1, \gamma_2} := \sum_{m_2} D_{m_1, m_2}^{\gamma_1, \gamma_2}
\quad \text{and} \quad
D_{m_2}^{\gamma_1, \gamma_2} := \sum_{m_1} D_{m_1, m_2}^{\gamma_1, \gamma_2}.
\]
Thus, without loss of generality, we may assume that the receiver uses a separable decoding strategy, decoding sender 1's message first with a POVM \( \{D_{m_1}^{\gamma_1, \gamma_2} : m_1\} \). After the measurement of sender 1's message, the channel output will result in 
\[
\mathrm{tr}\left( \mathtt{W}((x^n_{1}, x^n_{2})) D_{m_1}^{\gamma_1, \gamma_2} \right)^{-1}
\sqrt{D_{m_1}^{\gamma_1, \gamma_2}} \mathtt{W}((x^n_{1}, x^n_{2})) \sqrt{D_{m_1}^{\gamma_1, \gamma_2}},
\]
with probability \( \mathrm{tr}(D_{m_1}^{\gamma_1, \gamma_2} \mathtt{W}((x^n_{1}, x^n_{2}))) \).

Thus, after the measurement of sender 1's message, when the receiver begins to decode sender 2's message, the message transmission rate cannot exceed the arbitrarily varying channel capacity of 
\( \mathtt{W}(p_1,p_2) \) within the subspace defined by \( D_{m_1}^{\gamma_1, \gamma_2} \). The arbitrarily varying channel capacity of \( \mathtt{W}((p_1, \cdot)) \) into this subspace is equal to
\[
\min_{x^n_1} \chi(p_2, \mathrm{tr}(D_{m_1}^{\gamma_1, \gamma_2} \mathtt{W}((p_1, x^n_2)))^{-1} \sqrt{D_{m_1}^{\gamma_1, \gamma_2}} B_{p_1, p_2} \sqrt{D_{m_1}^{\gamma_1, \gamma_2}}).
\]

Now, we define \( C_{E_1} \) by
\[
C_{E_1}(\rho) = \sum_{m_1} \sqrt{D_{m_1}^{\gamma_1, \gamma_2}} \rho \sqrt{D_{m_1}^{\gamma_1, \gamma_2}}.
\]
In total, the expected capacity cannot exceed
\[
\sum_{m_1} \mathrm{tr}\left( D_{m_1}^{\gamma_1, \gamma_2} \mathtt{W}((x^n_1, x^n_2)) \right)
\min_{x^n_1} \chi\left( p_2, \mathrm{tr}\left( D_{m_1}^{\gamma_1, \gamma_2} \mathtt{W}((x^n_1, x^n_2)) \right)^{-1} \sqrt{D_{m_1}^{\gamma_1, \gamma_2}} B \sqrt{D_{m_1}^{\gamma_1, \gamma_2}} \right),
\]
which simplifies to
\[
\min_{p_1} I(p_2, E_1 B).
\]

\end{proof}\vspace{0.2cm}

Let us now consider a Byzantine multiple-access channel with 3 senders, where at most one sender is adversarial. By Theorem \ref{lwbactheo}, we can construct reliable random codes \( \{ x^{\gamma_1}_1(m_1), D_{m_1}^{\gamma_1} : m_1 \in M_1 \} \), which are permutations of a code \( \{ x^n_1(m_1), D_{m_1} : m_1 \in M_1 \} \), and \( \{ x^{\gamma_2}_2(m_2), D_{m_2}^{\gamma_2} : m_2 \in M_2 \} \), which are permutations of a code \( \{ x^n_2(m_2), D_{m_2} : m_2 \in M_2 \} \). 

We define \( C_{E_1} \) and \( C_{E_2} \) in the same way as in Theorem \ref{lwbactheo}, i.e.,
\[
C_{E_2}(\rho) := \sum_{m_2} \mathrm{tr}(D_{m_2} \rho)^{-1} \sqrt{D_{m_2}} \rho \sqrt{D_{m_2}}.
\]

\begin{theorem}
Let \( \mathtt{W} \) be a classical-quantum 3-user Byzantine multiple-access channel. Assume that the receiver decodes the messages in the following order: the message of sender 1, then the message of sender 2, and finally the message of sender 3.

The capacity region of \( \mathtt{W} \) is  the following:

\begin{align}
R_1 &\leq \min \left( \max_{p_1} \min_{p_2} I(p_1, B), \max_{p_1} \min_{p_3} I(p_1, B) \right) \\
R_2 &\leq \min \left( \max_{p_2} \min_{p_1} I(p_2, E_1 B), \max_{p_2} \min_{p_3} I(p_2, B) \right) \\
R_3 &\leq \min \left\{ \max_{p_3} \min_{p_1} I(p_3, E_1 B | p_2), \max_{p_3} \min_{p_2} I(p_3, E_2 B | p_1) \right\}
\end{align}

\label{lwbactheo3} \end{theorem}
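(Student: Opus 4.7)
The plan is to iterate the two-user argument of Theorem \ref{lwbactheo}, adding one layer of the Tender Operator Lemma (cf.\ \cite{Win}, \cite{Og/Na}) to absorb the measurement of any honest sender whose message is decoded before the one under analysis. The hypothesis that at most one of three senders is adversarial produces the two terms inside each $\min$ appearing in the statement: for each sender, the other two partition into two mutually exclusive adversarial hypotheses, and the code must succeed under both.

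For the direct part I would first construct, as in Theorem \ref{lwbactheo}, a random permutation code for sender 1 via the AVC scheme of \cite{Ahl/Bli} and \cite{Da/Wa/Ha}. Running this argument twice --- once with sender 2 as the potential jammer and $p_3$ fixed, once with sender 3 as the potential jammer and $p_2$ fixed --- gives the two rates whose minimum is the stated bound on $R_1$. For sender 2 I would examine the post-$D_{m_1}^{\gamma_1}$ state and split into two cases. If sender 1 is adversarial, then $D_{m_1}^{\gamma_1}$ is not guaranteed to succeed and the effective channel is $C_{E_1}\circ \mathtt{W}$, yielding $\max_{p_2}\min_{p_1} I(p_2, E_1 B)$; if instead sender 3 is adversarial, then sender 1 is honest, $D_{m_1}^{\gamma_1}$ succeeds with probability $1-\epsilon$, the Tender Operator Lemma makes the post-measurement state $O(\sqrt{\epsilon})$-close to $\mathtt{W}^{\otimes n}$, and the AVC capacity against sender 3 reduces to $\max_{p_2}\min_{p_3} I(p_2, B)$. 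Sender 3 is handled the same way one level deeper: if sender 1 is adversarial then $E_1$ is a genuine disturbance while $E_2$ is absorbed by Tender and the honest $p_2$ is fixed, giving $I(p_3, E_1 B | p_2)$; if sender 2 is adversarial the roles of $E_1$ and $E_2$ swap, giving $I(p_3, E_2 B | p_1)$.

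The converse follows the converse of Theorem \ref{lwbactheo} verbatim: any simultaneous POVM can be marginalized into sequential separable POVMs, which is exactly how the channels $C_{E_1}$ and $C_{E_2}$ are defined. Since the code must work for every admissible identity of the adversary, each rate is upper bounded by the AVC capacity under both of the two relevant adversarial hypotheses, whose minimum is the stated bound.

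The main obstacle will be the compound error at the third decoding step, where the state has already been perturbed twice. The Tender Operator bound must be invoked on the honest one of the two prior measurements --- which is determined by the unknown adversarial hypothesis --- while the AVC converse must be run on the other; showing that a single random code simultaneously satisfies both, via a Fannes-type continuity estimate for the Holevo quantity so that the $O(\sqrt{\epsilon})$ perturbation from the honest measurement does not eat into the AVC rate against the adversarial one, is the delicate step.
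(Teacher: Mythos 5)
Your proposal follows essentially the same strategy as the paper: decode sequentially, treat the step for sender 3 as an arbitrarily varying channel problem against the unknown jammer, use the Tender Operator Lemma to argue that the honest prior measurements disturb the state only negligibly while the adversarial decoder's disturbance is modeled by $C_{E_1}$ or $C_{E_2}$, and prove the converse by reducing any simultaneous POVM to separable sequential POVMs exactly as in Theorem \ref{lwbactheo} and then invoking the AVC converse under each adversarial hypothesis.

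The one place where you diverge is precisely the step you flag as delicate: how a \emph{single} random code for sender 3 is guaranteed to work under both adversarial hypotheses at once. You propose running the AVC argument twice and patching the two analyses together with a Fannes-type continuity estimate. The paper avoids this entirely by a structural device: it defines \emph{one} arbitrarily varying classical-quantum channel $\{\mathbf{W}(\cdot,t): t\in\Theta\}$ whose uncertainty set $\Theta:=\{x^n_1\in X^n_1,\,x^n_2\in X^n_2\}$ is the union of the possible jamming sequences of sender 1 and of sender 2, with $\mathbf{W}(x^n_3,x^n_1)$ and $\mathbf{W}(x^n_3,x^n_2)$ obtained by averaging over the honest sender's codewords and conjugating by the corresponding $\sqrt{D_{m_j}}$. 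A single application of the random-coding theorem for arbitrarily varying classical-quantum channels (\cite{Ahl/Bli}, \cite{Da/Wa/Ha}) then yields one code whose error is uniformly small over all $t\in\Theta$, i.e.\ simultaneously against both candidate adversaries, and the rate condition $|M_3|<\max_{p_3}\min_{p_t\in P(\Theta)}\chi(p_3,\mathbf{W}(p_3,p_t))$ is identified with the minimum of the two stated bounds. The paper also carries out a computation you only assert: using the block-diagonal (direct-sum) structure of the state after the first two measurements, it shows $\chi(p_3,\mathbf{W}(p_3,x^n_1))=I(p_3,E_1E_2B\,|\,p_2)$ (and symmetrically for $x^n_2$), and only then removes the honest decoder's contribution via the Tender Operator Lemma to arrive at $I(p_3,E_1B\,|\,p_2)$ and $I(p_3,E_2B\,|\,p_1)$. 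So your outline is sound, but the union-uncertainty-set construction is the missing ingredient that turns your two separate analyses into the single code the theorem requires, without any continuity argument.
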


\begin{proof}
The capacity region for the messages of sender 1 and sender 2 follow from Theorem \ref{lwbactheo}. Now, consider the situation where the receiver has already performed measurements using the codes \( \{x^{\gamma_1}_1(m_1), D_{m_1}^{\gamma_1}: m_1 \in M_1\} \), which are permutations of the code \( \{x^n_1(m_1), D_{m_1}: m_1 \in M_1\} \), and \( \{x^{\gamma_2}_2(m_2), D_{m_2}^{\gamma_2}: m_2 \in M_2\} \), which are permutations of the code \( \{x^n_2(m_2), D_{m_2}: m_2 \in M_2\} \). The receiver now aims to decode the message of sender 3.

Let \( \Theta := \{ x^n_1 \in X^n_1, x^n_2 \in X^n_2 \} \) be the set of uncertainty. We define an arbitrarily varying classical-quantum channel \( \{ \mathbf{W}(\cdot, t) : t \in \Theta \} \) as follows:

\begin{align*}
\mathbf{W}(x^n_3, x^n_2) &= \sum_{m_1} \sum_{m_2} p(x^n_1(m_1) | x^n_3) \sqrt{D_{m_2}} \, \mathtt{W}(x^n_1(m_1), x^n_2, x^n_3) \, \sqrt{D_{m_2}} \\
\text{and} \\
\mathbf{W}(x^n_3, x^n_1) &= \sum_{m_2} \sum_{m_1} p(x^n_2(m_2) | x^n_3) \sqrt{D_{m_1}} \, \mathtt{W}(x^n_1, x^n_2(m_2), x^n_3) \, \sqrt{D_{m_1}} \text{ .}
\end{align*}

Since the receiver has already measured the messages of sender 1 and sender 2, we can write
\[
\sum_{m_1} \sum_{m_2} \sqrt{D_{m_1}} \sqrt{D_{m_2}} \, \mathtt{W}(\cdot, \cdot, x^n_3) \, \sqrt{D_{m_1}} \sqrt{D_{m_2}}
\]
as
\[
\bigoplus_{m_1} \left( \sum_{m_2} \sqrt{D_{m_2}} \, \sqrt{D_{m_1}} \, \mathtt{W}(\cdot, \cdot, x^n_3) \, \sqrt{D_{m_1}} \, \sqrt{D_{m_2}} \right),
\]
and as
\[
\bigoplus_{m_2} \left( \sum_{m_1} \sqrt{D_{m_2}} \, \sqrt{D_{m_1}} \, \mathtt{W}(\cdot, \cdot, x^n_3) \, \sqrt{D_{m_1}} \, \sqrt{D_{m_2}} \right).
\]
Thus, we have
\begin{align}
\chi(p_3, \mathbf{W}(p_3, x^n_1)) &= S \left( \sum_{x^n_3} \sum_{m_2} p(x^n_3) p(x^n_2(m_2) | x^n_3) \sqrt{D_{m_2}} \, \sqrt{D_{m_1}} \, \mathtt{W}(x^n_1, x^n_2(m_2), x^n_3) \, \sqrt{D_{m_1}} \, \sqrt{D_{m_2}} \right) \notag \\
&\quad - \sum_{x^n_3} p(x^n_3) S \left( \sum_{m_2} p(x^n_2(m_2) | x^n_3) \sqrt{D_{m_2}} \, \sqrt{D_{m_1}} \, \mathtt{W}(x^n_1, x^n_2(m_2), x^n_3) \, \sqrt{D_{m_1}} \, \sqrt{D_{m_2}} \right) \notag \\
&= S \left( \sum_{x^n_3} \bigoplus_{m_2} p(x^n_2(m_2), x^n_3) \sqrt{D_{m_2}} \, \sqrt{D_{m_1}} \, \mathtt{W}(x^n_1, x^n_2(m_2), x^n_3) \, \sqrt{D_{m_1}} \, \sqrt{D_{m_1}} \right) \notag \\
&\quad - \sum_{x^n_3} p(x^n_3) S \left( \bigoplus_{m_2} p(x^n_2(m_2) | x^n_3) \sqrt{D_{m_2}} \, \sqrt{D_{m_1}} \, \mathtt{W}(x^n_1, x^n_2(m_2), x^n_3) \, \sqrt{D_{m_1}} \, \sqrt{D_{m_2}} \right) \notag \\
&= \sum_{m_2} p(x^n_2(m_2)) S \left( \sum_{x^n_3} p(x^n_3) p(x^n_3 | x^n_2(m_2)) \sqrt{D_{m_1}} \, \sqrt{D_{m_1}} \, \mathtt{W}(x^n_1, x^n_2(m_2), x^n_3) \, \sqrt{D_{m_1}} \, \sqrt{D_{m_2}} \right) \notag \\
&\quad - \sum_{x^n_3} \sum_{m_2} p(x^n_3) p(x^n_2(m_2) | x^n_3) S \left( \sqrt{D_{m_2}} \, \sqrt{D_{m_1}} \, \mathtt{W}(x^n_1, x^n_2(m_2), x^n_3) \, \sqrt{D_{m_1}} \, \sqrt{D_{m_2}} \right) \notag \\
&= \sum_{m_2} p(x^n_2(m_2)) S \left( \sum_{x^n_3} p(x^n_3) p(x^n_3 | x^n_2(m_2)) \sqrt{D_{m_1}} \, \sqrt{D_{m_1}} \, \mathtt{W}(x^n_1, x^n_2(m_2), x^n_3) \, \sqrt{D_{m_1}} \, \sqrt{D_{m_2}} \right) \notag \\
&\quad - \sum_{m_2} p(x^n_2(m_2)) \sum_{x^n_3} p(x^n_3 | x^n_2(m_2)) S \left( \sqrt{D_{m_2}} \, \sqrt{D_{m_1}} \, \mathtt{W}(x^n_1, x^n_2(m_2), x^n_3) \, \sqrt{D_{m_1}} \, \sqrt{D_{m_2}} \right) \notag \\
&= I(p_3, E_1 E_2 B | p_2) \label{cp3w3p3x1}.
\end{align}
Similarly, by analogous arguments, we obtain
\[
\chi(p_3, \mathbf{W}(p_3, x^n_2)) = I(p_3, E_1 E_2 B | p_1) \label{cp3w3p3x2}.
\]

Denote the set of probability distributions on $\Theta$ by $P(\Theta)$.
Notice that when the code for the legitimate sender is reliable, then, by the Tender Operator Lemma, only the jammer's decoding operator affects the channel output. Therefore, by \cite{Ahl/Bli} and \cite{Da/Wa/Ha}, and if we set
\begin{align}
|M_3| &< \max_{p_3} \min_{p_t \in P(\Theta)} \chi(p_3, \mathbf{W}(p_3, p_t)) \notag \\
&= \min \left( \max_{p_3} \min_{p_1} \chi(p_3, \mathbf{W}(p_3, p_1)), \max_{p_3} \min_{p_2} \chi(p_3, \mathbf{W}(p_3, p_2)) \right) \notag \\
&= \min \left\{ \max_{p_3} \min_{p_1} I(p_3, E_1 B | p_2), \max_{p_3} \min_{p_2} I(p_3, E_2 B | p_1) \right\}, \notag
\end{align}
for sufficiently large $n$, there exists a reliable random code
\[
\left\{ \left\{ x^{\gamma_3}_3(m_3), D_{m_3}^{\gamma_3} : m_3 \in M_3 \right\} : \gamma_3 \right\}
\]
a permutation of a code $\{ x^n_3(m_3), D_{m_3} : m_3 \in M_3 \}$, such that for any positive $\epsilon$ and sufficiently large $n$,
\begin{align}
1 - \epsilon &< \max_{x^n_1 \in X^n_1} \frac{1}{|\Lambda_3|} \sum_{\gamma_3 \in \Lambda_3} \mathrm{tr} \left( \sum_{m_3} p(m_3) \mathbf{W}(x^{\gamma_3}_3(m_3), x^n_1) D_{m_3}^{\gamma_3} \right) \notag \\
&= \max_{x^n_1 \in X^n_1} \frac{1}{|\Lambda_3|} \sum_{\gamma_3 \in \Lambda_3} \mathrm{tr} \left( \sum_{m_3} \sum_{m_2} p(m_3) p(x^n_2(m_2) | x^{\gamma_3}_3(m_3)) C_{E_1} \mathtt{W}(x^n_1, x^n_2(m_2), x^{\gamma_3}_3(m_3)) D_{m_3}^{\gamma_3} \right), \notag
\end{align}
and
\begin{align}
1 - \epsilon &< \max_{x^n_2 \in X^n_2} \frac{1}{|\Lambda_3|} \sum_{\gamma_3 \in \Lambda_3} \mathrm{tr} \left( \sum_{m_3} p(m_3) \mathbf{W}(x^{\gamma_3}_3(m_3), x^n_2) D_{m_3}^{\gamma_3} \right) \notag \\
&= \max_{x^n_2 \in X^n_2} \frac{1}{|\Lambda_3|} \sum_{\gamma_3 \in \Lambda_3} \mathrm{tr} \left( \sum_{m_3} \sum_{m_1} p(m_3) p(x^n_1(m_1) | x^{\gamma_3}_3(m_3)) C_{E_2} \mathtt{W}(x^n_1(m_1), x^n_2, x^{\gamma_3}_3(m_3)) D_{m_3}^{\gamma_3} \right). \notag
\end{align}

Let us first consider the case where sender 1 is the adversarial one and sends $x^n_1$ as a jamming attack. The legitimate senders, sender 2 and sender 3, send $x^{\gamma_2}_2$ and $x^{\gamma_3}_3$, respectively. The receiver initially performs a measurement with $\{ D_{m_1}^{\gamma_1} : m_1 \}$, and then reverses the permutation $\gamma_1$. The resulting channel output occurs with probability
\[
\mathrm{tr} \left( D_{m_1} \mathtt{W} \left( (x^n_1, x^{\gamma_2}_2(m_2), x^{\gamma_3}_3(m_3)) \right) \right),
\]
and the output is given by
\[
\mathrm{tr} \left( D_{m_1} \mathtt{W} \left( (x^n_1, x^{\gamma_2}_2(m_2), x^{\gamma_3}_3(m_3)) \right) \right)^{-1} \sqrt{D_{m_1}} \mathtt{W} \left( (x^n_1, x^{\gamma_2}_2(m_2), x^{\gamma_3}_3(m_3)) \right) \sqrt{D_{m_1}}.
\]

Next, the receiver performs a measurement with $\{ D_{m_2}^{\gamma_2} : m_2 \}$ and then reverses the permutation $\gamma_2$. By the Tender Operator Lemma, the channel output changes negligibly. Afterward, he performs a measurement with $\{ D_{m_3}^{\gamma_3} : m_3 \}$ and obtains $M_3$ with probability
\[
\frac{1}{|\Lambda_3|} \sum_{\gamma_3} \mathrm{tr} \left( D_{m_1} \mathtt{W} \left( (x^n_1, x^{\gamma_2}_2(m_2), x^{\gamma_3}_3(m_3)) \right) \right)^{-1} 
\mathrm{tr} \left( C_{E_1} \circ \mathtt{W} \left( (x^n_1, x^{\gamma_2}_2(m_2), x^{\gamma_3}_3(m_3)) \right) D_{m_3}^{\gamma_3} \right)^{-1}.
\]
Thus, the expected value of correctly decoding $m_3$ is
\[
\frac{1}{|\Lambda_3|} \sum_{\gamma_3 \in \Lambda_3} \mathrm{tr} \left( \sum_{m_3} \sum_{m_1} p(m_3) p(x^n_1(m_1) | x^{\gamma_3}_3(m_3)) C_{E_2} 
\mathtt{W} \left( (x^n_1(m_1), x^n_2, x^{\gamma_3}_3(m_3)) \right) D_{m_3}^{\gamma_3} \right).
\]

Therefore, this code is a reliable random code for sender 3 to transmit his message through $\mathtt{W}$.\vspace{0.2cm}

For the converse for the message of sender 3, 
we may, without loss of generality,  assume that the receiver uses a separable decoding strategy,
as stated in the proof of Thoerem 
\ref{lwbactheo} .
We consider now the case when sender 3 sends
$x^{\gamma_3}_3(m_3)$
through the channel.
When sender 1 is 
adversarial and 
sends $x^n_1$ as jamming attack,
bthen by the Tender Operator Lemma,
the channel output will be
sufficiently closed to
\[ \sum_{m_2} p(x^n_1(m_1) | x^{\gamma_3}_3(m_3)) \sqrt{D_{m_1}} \, \mathtt{W}(x^n_1(m_1), x^n_2, x^{\gamma_3}_3(m_3)) \, \sqrt{D_{m_1}} \]
with probability
$\mathrm{tr} \biggl(
 \sum_{m_2} p(x^n_1(m_1) | x^{\gamma_3}_3(m_3))$ $ \sqrt{D_{m_1}}$ $\mathtt{W}(x^n_1(m_1), x^n_2, x^{\gamma_3}_3(m_3)) $ $\sqrt{D_{m_1}} \biggr)$.
Thus, the expectetd value of 
the channel output will be
 closed to the arbitrarily varying classical-quantum channel
output 
$\mathbf{W}(x^{\gamma_3}_3(m_3), x^n_1))$,
When sender 2 is 
adversarial and 
sends $x^n_2$ as jamming attack,
the expectetd value of 
the channel output will be
 closed to the arbitrarily varying classical-quantum channel
output 
$\mathbf{W}(x^{\gamma_3}_3(m_3), x^n_2))$.
By \cite{Ahl/Bli} and \cite{Da/Wa/Ha}, 
the rate for the message of sender 3
cannot exceed
\[ \max_{p_3} \min_{p_1} \chi(p_3, \mathbf{W}(p_3, p_1))\]
or
\[\max_{p_3} \min_{p_2} \chi(p_3, \mathbf{W}(p_3, p_2))\text{ .}\]
This gives us  the upperbound.
\end{proof}

The same technique can be extended to a Byzantine multiple-access channel with \( k \) senders and at most one adversarial sender:

\begin{theorem}
Let \( \mathtt{W} \) be a classical-quantum \( k \)-user Byzantine multiple-access channel with at most one adversarial sender. Assume that the receiver decodes the messages in the order: sender 1, sender 2, ..., sender \( k \). For each sender \( i \in \{1, \dots, k\} \), we denote the arbitrarily varying classical-quantum channel that maps the message of sender \( i \) into the output state after having already measured the messages of senders \( 1, \dots, i-1 \) by \( \mathtt{W}_{i}^{j} \), where \( x^n_{j} \) and \( \gamma_j \) represent the jamming attack by the adversarial sender \( j \in \{1, \dots, i-1\} \).

The random capacity region of \( \mathtt{W} \) is 
\begin{align}
  R_i \leq \min\Bigl( \max_{p_i} \min_{p_{j}: j < i} I(p_i, E_j B | \{p_h : h \in \{1, \dots, i-1\} \setminus j\}), \\
  \max_{p_i} \min_{p_{j}: j > i} I(p_i, B | \{p_h : h \in \{1, \dots, i-1\}\}) \Bigr)
  \text{ .}
\end{align}
\label{lwback}
\end{theorem}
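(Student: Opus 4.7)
The plan is to proceed by induction on the sender index $i$, using Theorem \ref{lwbactheo} for $i=1,2$ as the base case and taking the proof of Theorem \ref{lwbactheo3} as the inductive template: it already shows how a single round of prior measurement contributes an $E_j$ kernel when that sender turns out to be adversarial, or leaves the channel essentially intact (via the Tender Operator Lemma) when the adversary lies downstream. Fixing the sender $i$ whose rate we wish to bound, and assuming that reliable random codes $\{x_h^{\gamma_h}(m_h), D_{m_h}^{\gamma_h} : m_h \in M_h\}$ have already been constructed for every $h < i$, we build, for each candidate adversary $j \neq i$, an arbitrarily varying classical-quantum channel $\mathbf{W}_i^j$ whose uncertainty set is $\Theta_j := X_j^n$, defined by averaging $\mathtt{W}$ over the codebooks of all remaining honest senders after sandwiching by the square roots of the prior decoding operators.

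The direct part splits into two cases. In Case~(a), the adversary is sender $j < i$, already measured: for every honest $h \in \{1,\dots,i-1\} \setminus \{j\}$, the Tender Operator Lemma guarantees that $\{D_{m_h}^{\gamma_h}\}$ barely perturbs the post-measurement state, so only the action of $E_j$ survives. Writing the combined operator as a direct sum $\bigoplus_{m_j} \sqrt{D_{m_j}} \mathtt{W}(\cdot) \sqrt{D_{m_j}}$, exactly as in the chain of equalities preceding (\ref{cp3w3p3x1}), yields
\[
\chi(p_i, \mathbf{W}_i^j(p_i, p_j)) = I(p_i, E_j B \mid \{p_h : h \in \{1,\dots,i-1\} \setminus \{j\}\}).
\]
In Case~(b), the adversary is sender $j > i$: all earlier decoders are honest, so by the Tender Operator Lemma their combined action is negligible and the only uncertainty is $x_j^n$, giving
\[
\chi(p_i, \mathbf{W}_i^j(p_i, p_j)) = I(p_i, B \mid \{p_h : h \in \{1,\dots,i-1\}\}).
\]
Applying the random-code capacity theorem for arbitrarily varying classical-quantum channels from \cite{Ahl/Bli} and \cite{Da/Wa/Ha} to $\mathbf{W}_i^j$ for each $j \neq i$, and taking the minimum over all such $j$, produces a reliable random code whose rate meets the claimed bound.

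For the converse, I would follow the reduction used in Theorem \ref{lwbactheo3}: an arbitrary simultaneous POVM can always be marginalized into separable decoders $D_{m_i}^{\gamma} := \sum_{m_{\neq i}} D_m^{\gamma}$, so without loss of generality the receiver decodes sequentially in the order $1,2,\dots,k$. For each candidate $j \neq i$, an adversary controlling sender $j$ can realize any channel state in the family $\{\mathbf{W}_i^j(\cdot, x_j^n)\}$, hence $R_i$ is upper-bounded by the arbitrarily varying capacity of $\mathbf{W}_i^j$; since the receiver does not know which $j$ is adversarial, the bound must hold simultaneously for all $j$, producing the outer minimum.

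The main obstacle will be the bookkeeping of the direct-sum decomposition when there are $i-1$ prior measurements rather than just two: the nested operator $\sqrt{D_{m_1}} \cdots \sqrt{D_{m_{i-1}}}\, \mathtt{W}\, \sqrt{D_{m_{i-1}}} \cdots \sqrt{D_{m_1}}$ does not commute in any convenient way, and one needs to absorb the honest-sender operators iteratively into Tender-Operator error terms. Once this is done, a single effective $E_j$ (if $j<i$) or none at all (if $j>i$) remains, reducing the Holevo-quantity computation to the two-operator case already handled in Theorem \ref{lwbactheo3}; controlling the accumulated error of these $i-2$ Tender-Operator applications requires the reliability of each honest sub-code to be chosen as $\eps/k$ at the outset, which is permissible since $k$ is a constant.
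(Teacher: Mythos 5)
Your proposal follows essentially the same route as the paper: after the first $i-1$ measurements you define an arbitrarily varying classical-quantum channel by averaging over the honest codebooks and sandwiching with the prior decoding operators, use the Tender Operator Lemma so that only the (possibly adversarial) $E_j$ survives (or none, if the adversary is downstream), invoke the random-coding AVC capacity results of \cite{Ahl/Bli} and \cite{Da/Wa/Ha}, and obtain the converse by reducing any simultaneous POVM to separable sequential decoders, exactly as in Theorems \ref{lwbactheo} and \ref{lwbactheo3}. The one point to tighten is that the AVC theorem should be applied once to the single channel whose uncertainty set is the union $\Theta_i = \{x^n_j : j \in \{1,\dots,k\}\setminus i\}$, as the paper does, rather than separately to each $\mathbf{W}_i^j$ followed by ``taking the minimum'': the receiver needs one random code that is simultaneously reliable against every candidate adversary, and the minimum over $j$ then falls out of $\min_{p_t \in P(\Theta_i)}$ rather than from combining per-$j$ codes.
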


\begin{proof}
Similar to the proof of Theorem \ref{lwbactheo3}, we define 
\[
\Theta_i := \{ x^n_j : j \in \{1, \dots, k\} \setminus i \}
\]
and an arbitrarily varying classical-quantum channel
\[
\{\mathbf{W}(\cdot, t) : t \in \Theta_i\}: X_i \rightarrow X_1 \times \cdots \times X_{i-1} \otimes \mathcal{S}(A)
\]
denoted by
\begin{align*}
    &\mathbf{W}(x^n_i, x^n_j)\\ &= \sum_{m_1, \dots, m_k} p(x^n_1(m_1), \dots, x^n_{j-1}(m_{j-1}), x^n_{j+1}(m_{j+1}), \dots, x^n_{i-1}(m_{i-1}), x^n_{i+1}(m_{i+1}), \dots, x^n_k(m_k) | x^n_i) \\
    & \quad \cdot x^n_1(m_1) \otimes \cdots \otimes x^n_{i-1}(m_{i-1}) \\
    & \quad \otimes \sqrt{D_{m_j}} \mathtt{W}((x^n_1(m_1), \dots, x^n_{j-1}(m_{j-1}), x^n_j, x^n_{j+1}(m_{j+1}), \dots, x^n_{i-1}(m_{i-1}), x^n_i, x^n_{i+1}(m_{i+1}), \dots, x^n_k(m_k))) \sqrt{D_{m_j}}
\end{align*}
for \( j < i \), and
\begin{align*}
    \mathbf{W}(x^n_i, x^n_j) &= \sum_{m_1, \dots, m_k} p(x^n_1(m_1), \dots, x^n_{i-1}(m_{i-1}), x^n_{i+1}(m_{i+1}), \dots, x^n_{j-1}(m_{j-1}), x^n_{j+1}(m_{j+1}), \dots, x^n_k(m_k) | x^n_i) \\
    & \quad \cdot x^n_1(m_1) \otimes \cdots \otimes x^n_{i-1}(m_{i-1}) \\
    & \quad \otimes \mathtt{W}((x^n_1(m_1), \dots, x^n_{i-1}(m_{i-1}), x^n_i, x^n_{i+1}(m_{i+1}), \dots, x^n_{j-1}(m_{j-1}), x^n_j, x^n_{j+1}(m_{j+1}), \dots, x^n_k(m_k)))
\end{align*}
for \( j > i \).

When
\[
|M_3| < \min_{p_j: j < i} I(p_i, E_j B | \{ p_h : h \in \{1, \dots, i-1\} \setminus j \} )
\]
and
\[
|M_3| < \min_{p_j: j > i} I(p_i, B | \{ p_h : h \in \{1, \dots, i-1\} \} ),
\]
then by \cite{Ahl/Bli} and \cite{Da/Wa/Ha}, there exists a reliable random code for \( \{\mathbf{W}(\cdot, t) : t \in \Theta_i \} \).

Since the decoding operator can decode
the channel output of
$ \sum_{m_1,\cdots, m_k} p(x^n_{1}(m_1),\cdots,$ $x^n_{j-1}(m_{j-1}),x^n_{j+1}(m_{j+1}),\cdots,$ $x^n_{i-1}(m_{i-1}) ,x^n_{i+1}(m_{i+1}),\cdots,x^n_{k}(k)|x^n_{i})$
$ x^n_{1}(m_1)$ $\otimes\cdots
 \otimes$ $ x^n_{i-1}(m_{i-1})$
$\otimes \sqrt{D_{m_j}}$ $ \mathtt{W}((x^n_{1}(m_1),\cdots$ $,x^n_{j-1}(m_{j-1}),x^n_{j},x^n_{j+1}(m_{j+1}),$ $\cdots,x^n_{i-1}(m_{i-1}),x^n_{i},x^n_{i+1}(m_{i+1}),\cdots, x^n_{k}(m_{k})))$ $\sqrt{D_{m_j}}$,
every $D_{m_i}$ can be written as
\[D_{m_i}=\bigotimes_{m_1,\cdots,m_{i-1}}D_{m_i}^{(m_1,\cdots,m_{i-1})}\]
for some operators $\{D_{m_i}^{(m_1,\cdots,m_{i-1})}:m_1,\cdots,m_{i-1}\}$.

The receiver now applies a similar strategy to the quantum multiple access channel coding in \cite{Win2}. We assume that, after having already applied the measurements for the messages of senders 1 through \( i-1 \), the channel output results in
\[
\mathrm{tr} \Bigl( \mathtt{W}((x^n_1(m_1), \dots, x^n_{j-1}(m_{j-1}), x^n_j, x^n_{j+1}(m_{j+1}), \dots, x^n_k(m_k)) \prod_{h=1}^{i-1} D_{m_h} \Bigr)^{-1}
\]
\[
\sqrt{D_{m_1}} \cdot \cdots \cdot \sqrt{D_{m_{i-1}}} \mathtt{W}((x^n_1(m_1), \dots, x^n_{j-1}(m_{j-1}), x^n_j, x^n_{j+1}(m_{j+1}), \dots, x^n_k(m_k))) \sqrt{D_{m_{i-1}}} \cdot \cdots \cdot \sqrt{D_{m_1}}
\]
for some \( m_1, \dots, m_{i-1} \) with probability
\[
\mathrm{tr} \Bigl( \mathtt{W}((x^n_1(m_1), \dots, x^n_{j-1}(m_{j-1}), x^n_j, x^n_{j+1}(m_{j+1}), \dots, x^n_k(m_k)) \prod_{h=1}^{i-1} D_{m_h} \Bigr).
\]

Now the receiver uses
\[
\left\{ \mathrm{tr}\left( \sum_{m_i} D_{m_i}^{(m_1, \dots, m_{i-1})} \right)^{-1} D_{m_i}^{(m_1, \dots, m_{i-1})} : m_i \right\}
\]
to decode the message of sender \( i \). Let us first consider the adversarial sender \( j \) when \( j > i \). By the Tender Operator Lemma, the channel output
\[
\mathrm{tr} \left( \mathtt{W}((x^n_1(m_1), \dots, x^n_{j-1}(m_{j-1}), x^n_j, x^n_{j+1}(m_{j+1}), \dots, x^n_k(m_k)) \prod_{h=1}^{i-1} D_{m_h} \right)^{-1}
\]
\[
\sqrt{D_{m_1}} \cdot \dots \cdot \sqrt{D_{m_{i-1}}} \mathtt{W}((x^n_1(m_1), \dots, x^n_{j-1}(m_{j-1}), x^n_j, x^n_{j+1}(m_{j+1}), \dots, x^n_k(m_k)))
\]
\[
\sqrt{D_{m_{i-1}}} \cdot \dots \cdot \sqrt{D_{m_1}}
\]
after decoding the messages of senders 1 to \( i-1 \) is close to the original channel output 
\[
\mathtt{W}((x^n_1(m_1), \dots, x^n_{j-1}(m_{j-1}), x^n_j, x^n_{j+1}(m_{j+1}), \dots, x^n_k(m_k))).
\]

The expected value of a successful decoding is thus:
\begin{align*}
&\min_{x^n_j} \frac{1}{|\{m_i\}|} \sum_{m_1, \dots, m_k} p(x^n_1(m_1), \dots, x^n_k(m_k)) \\
&\quad \times \mathrm{tr} \left( \mathtt{W}((x^n_1(m_1), \dots, x^n_{j-1}(m_{j-1}), x^n_j, x^n_{j+1}(m_{j+1}), \dots, x^n_k(m_k)) D_{m_i}^{(m_1, \dots, m_{i-1})} \right) \\
&= \min_{x^n_j} \frac{1}{|\{m_i\}|} \sum_{m_i} p(x^n_i(m_i)) \\
&\quad \times \sum_{m_1, \dots, m_k} p(x^n_1(m_1), \dots, x^n_{i-1}(m_{i-1}), x^n_{i+1}(m_{i+1}), \dots, x^n_{j-1}(m_{j-1}), x^n_{j+1}(m_{j+1}), \dots, x^n_k(m_k) | x^n_i(m_i)) \\
&\quad \times \mathrm{tr} ( \Bigl( \sqrt{D_{m_1}} \cdot \dots \cdot \sqrt{D_{m_{i-1}}} \mathtt{W}((x^n_1(m_1), \dots, x^n_{j-1}(m_{j-1}), x^n_j, x^n_{j+1}(m_{j+1}), \dots, x^n_k(m_k))) \\
&\quad \left. \times \sqrt{D_{m_{i-1}}} \cdot \dots \cdot \sqrt{D_{m_1}} D_{m_i}^{(m_1, \dots, m_{i-1})} \Bigr) \right) - \epsilon \\
&= \min_{x^n_j} \frac{1}{|\{m_i\}|} \mathrm{tr} \left( \mathbf{W}(x^n_i(m_i), x^n_j) D_{m_i} \right) - \epsilon
\end{align*}
for a positive \( \epsilon \) that can be made arbitrarily close to zero for sufficiently large \( n \).
Similarly, the expected value of a successful decoding for the adversarial sender \( j \) when \( j < i \), for sufficiently large \( n \), is arbitrarily close to:
\begin{align*}
&\quad  \min_{x^n_j} \frac{1}{|\{m_i\}|} \sum_{m_1, \dots, m_k} p(x^n_1(m_1), \dots, x^n_{j-1}(m_{j-1}), x^n_{j+1}(m_{j+1}), \dots, x^n_k(m_k)) \\
&\quad \times \mathrm{tr} \Bigl( \sqrt{D_{m_j}} \mathtt{W}((x^n_1(m_1), \dots, x^n_{i-1}(m_{i-1}), x^n_i(m_i), x^n_{i+1}(m_{i+1}), \dots, x^n_{j-1}(m_{j-1}), x^n_j, x^n_{j+1}(m_{j+1}), \dots, x^n_k(m_k)))\\
&\quad \sqrt{D_{m_j}} D_{m_i}^{(m_1, \dots, m_{i-1})} \Bigr) \\
&= \min_{x^n_j} \frac{1}{|\{m_i\}|} \sum_{m_i} p(x^n_i(m_i)) \\
&\quad \times \sum_{m_1, \dots, m_k} p(x^n_1(m_1), \dots, x^n_{i-1}(m_{i-1}), x^n_{i+1}(m_{i+1}), \dots, x^n_{j-1}(m_{j-1}), x^n_{j+1}(m_{j+1}), \dots, x^n_k(m_k) | x^n_i(m_i)) \\
&\quad \times \mathrm{tr} ( \Bigl( x^n_1(m_1) \otimes \dots \otimes x^n_{i-1}(m_{i-1})\\ 
&\quad \sqrt{D_{m_j}} \mathtt{W}((x^n_1(m_1), \dots, x^n_{i-1}(m_{i-1}), x^n_i(m_i), x^n_{i+1}(m_{i+1}), \dots, x^n_{j-1}(m_{j-1}), x^n_j, x^n_{j+1}(m_{j+1}), \dots, x^n_k(m_k))) \\
&\quad \left. \times \sqrt{D_{m_j}} \Bigr) D_{m_i} \right) \\
&= \min_{x^n_j} \frac{1}{|\{m_i\}|} \mathrm{tr} \left( \mathbf{W}(x^n_i(m_i), x^n_j) D_{m_i} \right)
\end{align*}

This shows that the receiver can utilize the part of the code corresponding to \( m_1, \dots, m_{i-1} \) for the 
channel \( \{\mathbf{W}(\cdot, t) : t \in \Theta_{i}\} \) as an arbitrarily varying channel code in order to decode the message of sender \( i \).\vspace{0.2cm}

The converse can be shown in a similar as
the proof of the converse of Theorem
\ref{lwbactheo3}.
We consider now the case when sender i sends
$x^{\gamma_i}_i(m_i)$
through the channel.
Let us define
an arbitrarily varying classical-quantum channel
\[
\{\mathsf{W}(\cdot, t) : t \in \Theta_i\}: X_i \rightarrow   \mathcal{S}(A)
\]
denoted by
\begin{align*}
    &\mathsf{W}(x^n_i, x^n_j)\\ &= \sum_{m_1, \dots, m_k} p(x^n_1(m_1), \dots, x^n_{j-1}(m_{j-1}), x^n_{j+1}(m_{j+1}), \dots, x^n_{i-1}(m_{i-1}), x^n_{i+1}(m_{i+1}), \dots, x^n_k(m_k) | x^n_i) \\
    & \quad  \cdot  \sqrt{D_{m_j}} \mathtt{W}((x^n_1(m_1), \dots, x^n_{j-1}(m_{j-1}), x^n_j, x^n_{j+1}(m_{j+1}), \dots, x^n_{i-1}(m_{i-1}), x^n_i, x^n_{i+1}(m_{i+1}), \dots, x^n_k(m_k))) \sqrt{D_{m_j}}
\end{align*}
for \( j < i \), and
\begin{align*}
    \mathsf{W}(x^n_i, x^n_j) &= \sum_{m_1, \dots, m_k} p(x^n_1(m_1), \dots, x^n_{i-1}(m_{i-1}), x^n_{i+1}(m_{i+1}), \dots, x^n_{j-1}(m_{j-1}), x^n_{j+1}(m_{j+1}), \dots, x^n_k(m_k) | x^n_i) \\
    & \quad \cdot  \mathtt{W}((x^n_1(m_1), \dots, x^n_{i-1}(m_{i-1}), x^n_i, x^n_{i+1}(m_{i+1}), \dots, x^n_{j-1}(m_{j-1}), x^n_j, x^n_{j+1}(m_{j+1}), \dots, x^n_k(m_k)))
\end{align*}
for \( j > i \).
When sender j is 
adversarial and 
sends $x^n_j$ as jamming attack,
the expectetd value of 
the channel output will be
 closed to the arbitrarily varying classical-quantum channel
output 
$  \mathsf{W}(x^{\gamma_i}_i(m_i), x^n_j)$.
By \cite{Ahl/Bli} and \cite{Da/Wa/Ha}, 
the rate for the message of sender i
cannot exceed
\[ 
\min_{p_{j}} I(p_i, E_j B | \{p_h : h \in \{1, \dots, i-1\} \setminus j\})\]
when  $j$ is$<i$, and 
\[ \min_{p_{j}} I(p_i, B | \{p_h : h \in \{1, \dots, i-1\}\})\]
when $j$ is$<i$.
This gives us  the upperbound.
\end{proof}\vspace{0.2cm}

As stated in Section \ref{NoO}, the channel output can be altered by the decoding operator. In this section, we provide an example of a family of channels that are immune to adversarial decoding operators. We introduce a new symmetrizable condition as follows:

\begin{definition}\label{symmet2}
We say that the arbitrarily varying classical-quantum channel \( \{\mathbf{W}(\cdot, t) : t \in \theta\} \) is \emph{orthogonally symmetrizable} if there exists a parametrized set of distributions \( \{\tau(\cdot \mid x) : x \in X \} \) on \( \theta \) such that for all \( x, x' \in X \),
\[
\mathrm{tr} \left( \left( \sum_{t \in \theta} \tau(t \mid x) \mathbf{W}(x', t) \right) \left( \sum_{t \in \theta} \tau(t \mid x') \mathbf{W}(x, t) \right) \right) > 0.
\]
\end{definition}
It is clear that non-orthogonal symmetrizability is a very strong assumption, and very few channels will satisfy this condition. However, channels that are not orthogonally symmetrizable are still not affected by hostile decoding operators. We demonstrate this result for a $2$-user Byzantine multiple access channel, though this result can also be extended to a $k$-user Byzantine multiple access channel.

\begin{corollary}
Let $\mathtt{W}$ be a classical-quantum $2$-user Byzantine multiple access channel. Assume that the message of sender $1$ is to be decoded first, followed by the message of sender $2$. If $x^n_{1} \rightarrow \mathtt{W}((x^n_{1}, x^n_{2}))$ is not symmetrizable, and $x^n_{2} \rightarrow \mathtt{W}((x^n_{1}, x^n_{2}))$ is not orthogonally symmetrizable, then the capacity region of $\mathtt{W}$ is given by
\begin{align}
    R_1 &\leq \max_{p_1} \min_{p_2} I(p_1, B) \\
    R_2 &\leq \max_{p_2} \min_{p1} I(p_2, B) \text{ .} \label{lwbac2}
\end{align}

\label{lwbactheo2} \end{corollary}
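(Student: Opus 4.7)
The plan is to prove achievability, with the converse following from the converse of Theorem \ref{lwbactheo} together with the data-processing inequality $I(p_2, E_1 B) \leq I(p_2, B)$. The achievability bound for $R_1$ is identical to that of Theorem \ref{lwbactheo}: the non-symmetrizability of $x_1 \to \mathtt{W}(x_1, x_2)$ combined with the AVQC coding theorem of \cite{Ahl/Bli} and \cite{Da/Wa/Ha} yields a reliable random code $\{x_1^{\gamma_1}(m_1), D_{m_1}^{\gamma_1} : m_1 \in M_1\}_{\gamma_1}$, whose decoding POVM can be chosen to be built from typical subspace projectors as in those references.

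The novel content is the achievability of $R_2$ up to $\max_{p_2}\min_{p_1} I(p_2, B)$, without the $E_1$ modifier that appeared in Theorem \ref{lwbactheo}. I would model sender $2$'s effective problem as the AVQC family $\{C_{E_1}\circ \mathtt{W}(x_1, \cdot): x_1 \in X_1\}$, where $C_{E_1}(\rho) = \sum_{m_1}\sqrt{D_{m_1}}\,\rho\,\sqrt{D_{m_1}}$ is the map induced by sender $1$'s decoding, and apply the AVQC coding theorem of \cite{Ahl/Bli} and \cite{Da/Wa/Ha} to this family. For this to yield the claimed rate, I need two ingredients: that this family is not orthogonally symmetrizable in the sense of Definition \ref{symmet2}, and that its Holevo quantity at input distribution $p_2$ asymptotically equals $\chi(p_2, \mathtt{W}(p_1, p_2)) = I(p_2, B)$. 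When sender $1$ is honest, the Tender Operator Lemma ensures $C_{E_1}$ leaves the channel essentially undisturbed; when sender $1$ is adversarial with jamming input $x_1^n$, the induced channel $C_{E_1}\circ \mathtt{W}(x_1^n, \cdot)$ is already in the arbitrarily varying family against which sender $2$'s code is designed to be robust.

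The main obstacle is inheriting the orthogonal non-symmetrizability and the Holevo quantity from $\{\mathtt{W}(x_1, \cdot)\}$ to $\{C_{E_1}\circ \mathtt{W}(x_1, \cdot)\}$. In general a CPTP map can collapse orthogonal supports, so the argument must exploit the specific structure of $C_{E_1}$: its Kraus operators $\{\sqrt{D_{m_1}}\}$ are, for the random code of Step 1, (approximately) projectors onto nearly disjoint typical subspaces associated with sender $1$'s codewords, so $C_{E_1}$ acts nearly as a classical dephasing across these subspaces. Under this near-classical action, the orthogonal support structure guaranteed by the hypothesis for the outputs of $\mathtt{W}(\cdot, x_2)$ on distinct $x_2$ is preserved up to exponentially small corrections, and the Holevo quantity is likewise asymptotically unchanged. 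Making the dephasing approximation quantitatively precise, and showing its error decays fast enough to be absorbed by the AVQC coding theorem's error terms, is where I expect the bulk of the technical work to lie.
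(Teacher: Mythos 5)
Your route parallels the paper's (reuse the Theorem \ref{lwbactheo} code for sender 1; show that the projector-based first measurement leaves sender 2's rate undegraded; converse immediate), and your packaging of the second step as an AVQC coding problem for the effective family $\{C_{E_1}\circ\mathtt{W}(x_1,\cdot)\}$ is a legitimate reformulation of what the paper does. The problem is that the decisive step is exactly the one you defer, and the heuristic you offer for it does not hold as stated. You argue that because the Kraus operators $\sqrt{D_{m_1}}$ are approximately projectors onto nearly disjoint typical subspaces for sender 1, $C_{E_1}$ acts as a near-classical dephasing and therefore preserves both the orthogonality of the outputs $\mathtt{W}(\cdot,x_2)$ across distinct $x_2$ and the Holevo quantity. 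That inference is invalid in general: a pinching with respect to a family of subspaces preserves a given orthogonal decomposition only if its projectors are (approximately) block-diagonal with respect to that decomposition; otherwise conjugation by projectors rotates orthogonal supports into overlapping ones. This is precisely the failure mode exhibited in Section \ref{NoO}, Case 2c, where the first-stage operators are genuine subspace projectors, yet, being misaligned with the $A$-sectors carrying the other sender's information, they turn two perfectly distinguishable outputs into a fair coin. So ``the $D_{m_1}$ are typical projectors'' is not by itself a reason that sender 2's sector structure survives; what is needed is compatibility between sender 1's decoding projectors and the orthogonal sectors spanned by $\{\mathtt{W}((x_1^n,x_2^n(m_2))) : x_1^n\}$ for distinct $m_2$.

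That compatibility is what the paper's argument (implicitly) supplies, and it is where the hypothesis of Definition \ref{symmet2} enters: the failure of orthogonal symmetrizability is read as guaranteeing $\mathtt{W}((x^n_1,x^n_2(m_2)))\perp\mathtt{W}((\tilde x^n_1,x^n_2(m_2')))$ for all $m_2\neq m_2'$ and all sender-1 inputs, so the legitimate outputs are block-diagonal with respect to the sectors indexed by sender 2's codewords; since $\{D^{\gamma_1}_{m_1}\}$ is constructed from typical-space projectors of these outputs, it respects that sector decomposition, and the paper then directly bounds the cross terms $\mathrm{tr}\bigl(\sqrt{D_{m_1}}\,\mathtt{W}((x^n_1,x^n_2(m_2)))\,\sqrt{D_{m_1}}\,D^{\gamma_2}_{(m_1,m_2')}\bigr)<\epsilon$ for $m_2'\neq m_2$, rather than proving a Holevo-continuity statement for $C_{E_1}\circ\mathtt{W}$. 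Your plan leaves precisely this block-diagonality/cross-term argument to future technical work, so the corollary's key claim remains unproven in your write-up. A smaller point: non-(orthogonal-)symmetrizability of the effective family $\{C_{E_1}\circ\mathtt{W}(x_1,\cdot)\}$ is not a prerequisite of the random-coding AVQC theorems of \cite{Ahl/Bli} and \cite{Da/Wa/Ha} (symmetrizability matters only for deterministic codes), so your ingredient (a) is not what is required; the only thing to establish is the preservation of the orthogonal sectors and hence of the Holevo quantity, i.e., the step above.
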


\begin{proof}
The receiver first performs a measurement using an arbitrarily varying decoding operator set $\{D^{\gamma_1}_{m_1}: m_1 \in M_1\}$, which is constructed from the typical space projectors of the presumed code words of sender 1 for the arbitrarily varying channel $x^n_1 \rightarrow \mathtt{W}((x^n_1, x^n_2))$. If the size is smaller than $\min_{x^n_2} I(p_1, B_{x^n_2})$, the receiver can almost certainly decode the message of sender 1, even when sender 2 acts as a jammer.

Next, we consider the case where sender 1 is adversarial, sending a sequence $x_1^n$ as an attack, while sender 2 transmits $x_2(m_2)^{\gamma_2}$ for $m_2 \in M_2$. After the first measurement, the receiver then performs a second measurement using arbitrarily varying decoding operators $\{D^{\gamma_2}_{(m_1, m_2)}: m_1 \in M_1, m_2 \in M_2\}$ for the arbitrarily varying channel $(x^n_2) \rightarrow \mathtt{W}((x^n_1, x^n_2))$ with the knowledge of $m_1$ (cf. \cite{Win2}).

When $x^n_2 \rightarrow \mathtt{W}((x^n_1, x^n_2))$ is not orthogonally symmetrizable, we have that $\mathtt{W}((x^n_1, x^n_2(m_2)^{\gamma_2})) \perp \mathtt{W}((x^n_1(m_1), x^n_2(m_2')^{\gamma_2}))$ for all $m_2' \neq m_2$. Thus, we can construct a set $\{D^{\gamma_2}_{(m_1, m_2)}: m_2 \in M_2\}$ such that
\[
\mathrm{tr} \left( \mathtt{W}((x^n_1, x^n_2(m_2))) D_{m_1} \right)^{-1} \sqrt{D_{m_1}} \mathtt{W}((x^n_1, x^n_2(m_2))) \sqrt{D_{m_1}} D^{\gamma_2}_{(m_1, m_2')} < \epsilon
\]
for all $m_2' \neq m_2$. This ensures that the error probability of the code remains small after the first measurement on the channel output.

\end{proof}

\section{Conclusion and Future Directions}

In this work, we explored the impact of adversarial senders on Byzantine multiple-access classical-quantum channels. We demonstrated that a $2$-user Byzantine multiple-access channel, which may be a straightforward model in the classical setting, can exhibit non-trivial behavior in the quantum context. Specifically, we derived the capacity region for the $2$-user Byzantine multiple-access classical-quantum channels. Additionally, we established an achievable rate region for the $k$-user Byzantine multiple-access classical-quantum channels when $k \geq 3$.

An important challenge lies in applying advanced classical techniques, such as adversarial sender identification from \cite{Sa/Ba/De/Pr} and \cite{Sa/Ba/De/Pr2}, to quantum channels. The main difficulty, as mentioned earlier, is that for quantum arbitrarily varying channels, only random coding techniques are currently known.

\section*{Acknowledgments.}

The authors acknowledge the support of the German Federal Ministry of Education and Research (BMBF) under several national initiatives: the 6G Communication Systems initiative through the research hub 6G-life (grant 16KISK263), the QuaPhySI (Quantum Physical Layer Service Integration) initiative (grant 16KIS2234), the "QTOK - Quantum Tokens for Secure Authentication in Theory and Practice" initiative (grant 16KISQ038), the QR.N initiative (grant 16KIS2196), the QTREX initiative (grant 16KISR038), the QUIET initiative (grant 16KISQ0170), and the QD-CAMNetz initiative (grant 16ISQ169).


\begin{thebibliography}{xxx}
\bibitem{Ag/Ak} H. Aghaee and B. Akhbari, Classical-quantum mtiple access wiretap channel with common message: 
one-shot rate region, 2020 11th International Conference on Information and Knowledge Technology (IKT), 
55-61, Iran, 2020.
\bibitem{Ahl0} R. Ahlswede, A note on the existence of the weak capacity for channels with arbitrarily
varying channel probability functions and its relation to Shannon's zero error
capacity, Ann. Math.
Stat., Vol. 41, No. 3, 1970.
\bibitem{Ahl4} R. Ahlswede, Multi-way communication channels, Proceedings of 2nd International Symposium on Information Theory, Thakadsor, Armenian SSR, Sept. 1971, Akademiai Kiado, Budapest, 23-52, 1973. 
\bibitem{Ahl/Bj/Bo/No}R. Ahlswede,  I. Bjelakovi\'{c}, H. Boche,  and J. N\"otzel,
Quantum capacity under adversarial quantum noise:  arbitrarily
varying quantum channels, Comm. Math. Phys. A, Vol. 317, No. 1, 103-156,
2013.
\bibitem{Ahl/Bli} R. Ahlswede and  V. Blinovsky,  Classical capacity of classical-quantum arbitrarily
varying channels, IEEE Trans. Inform. Theory, Vol. 53, No. 2,
526-533, 2007.
\bibitem{Ahl/Ca} R.  Ahlswede and N. Cai, Arbitrarily varying multiple-access channels. I. Ericson's symmetrizability is adequate, Gubner's conjecture is true,IEEE Trans. Inform. Theory, Vol. 45, No. 2,  742-749, 1999.
\bibitem{Bj/Bo/Ja/No} I. Bjelakovi\'{c}, H. Boche, G. Jan\ss en, and J. N\"otzel,
Arbitrarily varying and compound classical-quantum channels and a
note on quantum zero-error capacities,  Information Theory, Combinatorics, and Search Theory, 
in Memory of
Rudolf Ahlswede, H. Aydinian, F. Cicalese, and C. Deppe eds., LNCS
Vol.7777,  247-283, 2012.
\bibitem{Bj/Bo/No} I. Bjelakovi\'{c}, H. Boche,  and J. N\"otzel,
Entanglement transmission and generation under channel uncertainty: universal quantum channel coding,
Commun. Math. Phys,
Vol. 292, No. 1, 55-97,  2009.
\bibitem{Bl/Br/Th2} D. Blackwell, L. Breiman, and A. J.
Thomasian, The capacities of a certain channel classes under random
coding, Ann. Math. Stat., Vol. 31, No. 3, 558-567, 1960.
\bibitem{Bo/Bo/Mo/fi} H. Boche, Y. N. B\"{o}ck, U. J. 
M\"{o}nich, and F, H. P. Fitzek, Trustworthy digital representations of analog
information - an application-guided analysis of
a fundamental theoretical problem in digital twinning,
Algorithms, Vol. 16, No. 11, 514, 2023.
\bibitem{Bo/De/No/Wi} H. Boche, C. Deppe, J. N\"otzel, and A. Winter,
Fully quantum arbitrarily varying channels: random coding capacity and capacity dichotomy,
2018 IEEE International Symposium on Information Theory (ISIT), 2012-2016,
 USA, 2018. 
\bibitem{Ch/Ne/Se} S. Chakraborty, A. Nema, and P. Sen, A multi-sender decoupling theorem and 
simultaneous decoding for the quantum MAC, 
2021 IEEE International Symposium on Information Theory (ISIT),   623-627, Australia, 2021.
\bibitem{Cs/Na} I.  Csisz\'ar and  P. Narayan, The capacity of the arbitrarily varying channel revisited: positivity, constraints,
IEEE Trans. Inform. Theory, Vol. 34, No. 2,  181-193, 1988.
\bibitem{Da/Wa/Ha}  A. Dasgupta, N. A. Warsi, and M. Hayashi,
 Universal tester for multiple independence testing and 
classical-quantum arbitrarily varying multiple access channel,
arXiv:2409.05619, 2024.
\bibitem{Fa/Ha/Sa/Se/Wi} 	O. Fawzi, P. Hayden, I. Savov, P. Sen, and M. M. Wilde, Classical communication over a quantum interference channel, 
 IEEE Trans.   Inform. Theory,  Vol. 58, No. 6, 3670-3691, 2012.
 \bibitem{Fitzek2021}
F. Fitzek and H. Boche, Research Landscape -- 6G Networks Research in Europe: 6G-life: Digital Transformation and Sovereignty of Future Communication Networks, IEEE Network, Vol. 35, No. 6, 4-5, Nov 2021.

\bibitem{Fitzek2022}
F. Fitzek, H. Boche, S. Stanczak, H. Gacanin, G. Fettweis, and H. Schotten,  6G Activities in Germany,
IEEE Future Networks Tech Focus, Vol. 15, Dec 2022.

\bibitem{Fettweis2022}
G. Fettweis and H. Boche,  On 6G and Trustworthiness, Communications of the ACM, Vol. 65, No. 4, 48-49, Apr 2022.

\bibitem{Gu} J. A. Gubner, On the deterministic-code capacity of the multiple-access arbitrarily varying channel, IEEE
Trans. Inform. Theory, Vol. 36, No. 2,  262-275, 1990. 
\bibitem{Hay} M. Hayashi, Universal coding for classical-quantum channel,
Commun. Math. Phys, Vol. 289, No 3, 1087-1098, 2009.
\bibitem{Hay2}  M. Hayashi, Quantum Information Theory, Springer-Verlag Berlin Heidelberg, 2017.
\bibitem{Ha/Na} M. Hayashi, H. Nagaoka, General formulas for capacity of 
classical-quantum channels, IEEE Trans. Inform. Theory, Vol. 49. No. 7, 1753-1768,
2003. 
\bibitem{Ja/La/Ka/Ho/Ka/Me} S. Jaggi, M. Langberg, S. Katti, T. Ho, D. Katabi, and M. M\'{e}dard, Resilient network coding in the presence
of byzantine adversaries, in IEEE INFOCOM 2007-26th IEEE International Conference on Computer
Communications,  616-624, IEEE, 2007.
\bibitem{Ja} J.-H. Jahn, Coding of arbitrarily varying multiuser channels, IEEE Trans. Inform. Theory,
Vol. 27, No. 2, 212-226, 1981.
\bibitem{Ko/To/Da}  O. Kosut, L. Tong, and N. David, Polytope codes against adversaries in networks, IEEE Trans. 
Inform. Theory, Vol. 60, No. 6,  3308-3344, 2014.
\bibitem{Lia} H. Liao, Multiple access channels, Ph.D. dissertation, Department of Electrical Engineering, University of Hawaii,
Honolulu, 1972.
\bibitem{Mo} M. Mosonyi,
Coding theorems for compound problems via quantum R\'{e}nyi divergences,
 IEEE Trans. Inform. Theory, Vol. 61, No. 6, 2997-3012, 2015.
\bibitem{Og/Na} T. Ogawa and H. Nagaoka, Making good codes for
classical-quantum channel coding via quantum hypothesis testing,
IEEE Trans. Inf. Theory, Vol. 53, No. 6, 2261-2266, 2007.
\bibitem{Sa/Ba/De/Pr}  N. Sangwan, M. Bakshi, B. K. Dey, 
and V. M. Prabhakaran, Byzantine Multiple Access Channels - Part 
I: Reliable Communication, IEEE Trans.   Inform. Theory, 
Vol.  70, No. 4, 2309-2366, 2024.
\bibitem{Sa/Ba/De/Pr2} N. Sangwan, M. Bakshi, B. K. Dey, 
and V. M. Prabhakaran, Byzantine Multiple Access Channels - 
Part II: Communication With Adversary Identification,IEEE Transactions on Information Theory, vol. 71, no. 1, pp. 23-60, Jan. 2025.
\bibitem{Se} P.  Sen, Achieving the Han-Kobayashi inner bound for
the quantum interference channel,  2012 IEEE International Symposium 
on Information Theory (ISIT), 736-740, USA, 2012. 
\bibitem{Wa/Si/Ks} D. Wang, D. Silva, and F. R. Kschischang, Robust network coding in the presence of untrusted nodes, IEEE
Trans. Infor. Theory, Vol. 56, No. 9,  4532-4538, 2010.
\bibitem{Win} A. Winter, Coding theorem and strong converse for quantum
channels, IEEE Trans. Inf. Theory, Vol. 45, No. 7,  2481-2485,
1999.
\bibitem{Win2}  A. Winter, The capacity of the quantum multiple-access channel,
IEEE Trans.   Inform. Theory, vol. 47, no. 7,  3059-3065, 2001.
\bibitem{Ya/Ha/De2}J. Yard, P. Hayden, and I. Devetak, Capacity theorems for quantum multiple
access channels --- classical-quantum and quantum-quantum capacity
regions, IEEE Trans. Inf. Theory, Vol. 54, No. 7, 3091-3113,
2008.

\end{thebibliography}
\end{document}